\newcommand{\orcid}[1]{$\,$\href{https://orcid.org/#1}{\textcolor[HTML]{A6CE39}{\faOrcid}}}
\DeclarePairedDelimiter\ceil{\lceil}{\rceil}
\theoremstyle{plain}
\newtheorem{theorem}{Theorem}
\newtheorem{conjecture}{Conjecture}
\theoremstyle{definition}
\newtheorem{definition}{Definition}[section]
\begin{document}

\title{Bounding the Graph Capacity with Quantum Mechanics and Finite Automata}
\author[1]{Alexander Meiburg\footnote{\href{mailto:shanqdfa@ohaithe.re}{shanqdfa@ohaithe.re}}\orcid{0000-0002-4506-9146}}
\affil[1]{Perimeter Institute of Theoretical Physics}
\affil[1]{Institute for Quantum Computing, University of Waterloo}

\maketitle

\begin{abstract}
The zero-error capacity of a channel (or ``Shannon capacity of a graph") quantifies how much information can be transmitted with no risk of error. In contrast to the Shannon capacity of a {\em channel}, the zero-error capacity has not even been shown to be computable: we have no convergent upper bounds. In this work, we present a new quantity, the zero-error {\em unitary} capacity, and show that it can be succinctly represented as the tensor product value of a quantum game. By studying the structure of finite automata, we show that the unitary capacity is within a controllable factor of the zero-error capacity. This allows new upper bounds through the sum-of-squares hierarchy, which converges to the commuting operator value of the game. Under the conjecture that the commuting operator and tensor product value of this game are equal, this would yield an algorithm for computing the zero-error capacity. 
\end{abstract}

\section{Introduction}
The Shannon capacity $\Theta$ of a graph $G$ measures the {\em zero-error capacity} of a communication channel with symbols $V(G)$ and a confusion graph $G$. Denote by $\alpha(G)$ the size of the largest independent set in $G$, and by $\boxtimes$ the strong graph product. $G^{\boxtimes n}$ is the $n$-fold strong graph product of $G$ with itself. Then the Shannon capacity is defined\supercite{Shannon56} as
\begin{equation}
    \Theta(G) = \lim_{n\to \infty} \sqrt[n]{\alpha\big(G^{\boxtimes n}\big)}.
\end{equation}
The limit always exists, as a consequence of Fekete's Lemma and the supermultiplicativity of $\alpha$. In terms of communication, $\Theta(G)$ counts asymptotically how many $n$-character strings with symbols in $V(G)$ can be all pairwise {\em distinguishable}: for any pair of strings $x$ and $y$, there is at least one position $x_i$ and $y_i$ that is not {\em confusable}, i.e. $(x_i,y_i) \not \in E(G)$.

The behavior of the Shannon capacity is difficult to control\supercite{Alon98,Alon06,Bi2019}. While every independent set of a $G^{\boxtimes n}$ yields a lower bound, and these lower bounds converge, it is difficult to identify when the sequence might converge\supercite{Alon06}. While it is true that
\begin{equation}
    \Theta(G \sqcup H) \ge \Theta(G) + \Theta(H)
\end{equation}
\begin{equation}
    \Theta(G \boxtimes H) \ge \Theta(G) \Theta(H)
\end{equation}
as one would expect, equality does {\em not} hold in general\supercite{Alon98,Schrijver23}. The Shannon capacity of the 5-cycle, $\Theta(C_5) \ge \sqrt{\alpha(G^{\boxtimes 2})} = \sqrt{5}$, was open for 24 years until Lovasz\supercite{Lovasz79} provided a matching upper bound of $\sqrt{5}$, a bound that today is often understood in terms of semidefinite programming. Other upper bound techniques\supercite{Haemers81,Hu18} have appeared, but even the capacity of the 7-cycle remains open. Recently, Bi \& Tang\supercite{Bi2019} showed that a large class of upper bounds based on conic programming cannot beat the bound of Lovasz. It is not even known whether graph capacity is computable.

In this paper we put forward a new upper bound on Shannon capacity using a Sum of Squares hierarchy. We start by relating Shannon capacity to what we call the {\em reversible capacity} of a graph. This quantity comes from a perspective less of graph theory, and more returning to the original setting of communication channels and error correcting languages: in particular, languages recognized by reversible DFAs\supercite{Lombardy02_revdfa,Pin92_revdfa}. We show that reversible capacity converges to the Shannon capacity in a controlled way, so that if the reversible capacity is computable, then the Shannon capacity is too (Theorem \ref{thm:comp_imply_comp}). Motivated by this, a further modification as the {\em unitary capacity} is equal to the Shannon capacity, and can be written as a tensor product Noncommutative Polynomial Optimization (NPO) problem \supercite{NPA08,BKP16}. This problem has convergent upper bounds to the commuting operator value, which is an upper bound to the true value that we conjecture to be tight.

If the conjecture is proved correct and our NPO has equal commuting operator and tensor product values, then this would give a computable algorithm for the zero-erro capacity of a graph.

\section{Graph Capacity and Regular Languages}
The graph capacity $\Theta$ counts tuples of vertices that form independent sets. This is thought of as the limit of sets of strings of successively longer lengths $n$: $\Theta(G) = \lim_{n\to\infty}\sqrt[n]{\alpha(G^{\boxtimes n})}$ requires counting strings of length $n$ that are all pairwise distinguishable (not confusable). But if we are considering the capacity of a channel described by $G$, then in an equally valid perspective, we can think of a {\em language} of acceptable strings, of many different lengths. For instance, for the cycle graph $C_5$, an optimal set of $n=1$ is the set of strings $\{0,2\}$. We can take the language of all strings built from these: $L_{C_5,2} = \{0,2,00,02,20,22,000,002,\dots\} = \{0,2\}^*$. Any two strings of the same length in this language are distinguishable, and so this language $L_{C_5,2}$ witnesses the fact that $\Theta(G) \ge 2$ because its {\em growth rate} as a language is 2. An optimal set for $n=2$ on the five-cycle is $\{00,12,24,31,43\}$, and from that we can build the language $L_{C_5,\sqrt{5}} = \{00,12,24,31,43\}^*$. Since this as a growth rate of $\sqrt{5}$, it witnesses that $\Theta(G) \ge \sqrt{5}$.

It is clear that the language-oriented perspective gives an equivalent definition of graph capacity, because at any finite $n$ achieving a capacity $\sqrt[n]{\alpha(G^n)}$, we can build a corresponding language with the same growth rate; and vice versa, by taking a language with a given growth rate, we can approach that growth arbitrarily closely by looking at sets of strings of increasing lengths. We formalize this as follows.

\begin{definition}[Graph languages]
A language $L$ over a graph $G$ is a language over the alphabet $V(G)$. The language $L$ is {\em distinguishable} if for any two strings $x,y\in L$ of equal length, there is some index $i$ such that $(x_i,y_i) \not\in E(G)$.
\end{definition}
\begin{definition}[Growth Rate]
The {\em growth rate}\footnote{This definition seems to be standard terminology, see e.g. \supercite{Shur10_growth}, but I am unable to find a source as to who actually coined it. Perhaps it is simply too natural.} of a language $L$ is $\lim\sup_{n\to \infty}|L(n)|^{1/n}$, where $L(n)$ is set of words in the language of length $n$ and so $|L(n)|$ is the number of words of length $n$.
\end{definition}

The two example languages above have a property in common: the languages are {\em regular}. That is to say, there is a finite state automaton that accepts exactly those languages; alternately, that they are given by a regular expression. The regular expressions are \verb?(02)*? and \verb?(00|12|24|31|43)*?. Regular languages have many nice properties and, in particular, it is easy to reason about their growth rate. So, perhaps instead of searching for independent sets, we can search for regular languages!

\begin{theorem}\label{thm:regCapEqCap}
The graph capacity among all languages (that is, the Shannon capacity) is equal to the capacity among regular languages.
\end{theorem}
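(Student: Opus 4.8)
The plan is to prove equality by sandwiching the capacity over regular languages between two copies of $\Theta(G)$. Since regular languages are a subclass of all languages, the capacity over regular languages is trivially at most the capacity over all languages, so the real content is two estimates: every distinguishable language (regular or not) has growth rate at most $\Theta(G)$, and regular distinguishable languages can have growth rate arbitrarily close to $\Theta(G)$. Together these also pin down the capacity over all languages as $\Theta(G)$, justifying the parenthetical identification in the statement.

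For the upper bound, let $L$ be any distinguishable language over $G$ and let $L(n)$ be its set of length-$n$ words. By the definition of distinguishability, any two distinct words of $L(n)$ disagree on a non-edge of $G$ in some coordinate; unpacking the definition of the strong product, this says exactly that $L(n)$ is an independent set in $G^{\boxtimes n}$. Hence $|L(n)| \le \alpha(G^{\boxtimes n})$, so $|L(n)|^{1/n} \le \alpha(G^{\boxtimes n})^{1/n} \le \sup_m \alpha(G^{\boxtimes m})^{1/m} = \Theta(G)$, where the last equality is the Fekete/supermultiplicativity fact quoted above ($\lim = \sup$ for the superadditive sequence $\log\alpha(G^{\boxtimes n})$). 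Taking $\limsup_n$ gives growth rate $\le \Theta(G)$ for every distinguishable language.

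For the lower bound I would exhibit, for each $n$, a regular distinguishable language of growth rate exactly $\alpha(G^{\boxtimes n})^{1/n}$. Fix a maximum independent set $S \subseteq V(G^{\boxtimes n})$, regard each $s\in S$ as a length-$n$ string over $V(G)$, and let $L_n = S^{*}$ be all finite concatenations of such blocks; this is regular, being the Kleene star of a finite language. To see $L_n$ is distinguishable, take distinct equal-length words $x,y\in L_n$: both factor uniquely into length-$n$ blocks, so they first differ in some block $x^{(j)}\ne y^{(j)}$ with $x^{(j)},y^{(j)}\in S$, and since $S$ is independent in $G^{\boxtimes n}$ these blocks disagree on a non-edge in some coordinate, which (shifted into position) witnesses distinguishability of $x,y$. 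Counting gives $|L_n(m)| = |S|^{m/n}$ when $n\mid m$ and $0$ otherwise, so the growth rate is $\limsup_m |L_n(m)|^{1/m} = |S|^{1/n} = \alpha(G^{\boxtimes n})^{1/n}$. Letting $n\to\infty$, the supremum of growth rates of regular distinguishable languages is at least $\sup_n \alpha(G^{\boxtimes n})^{1/n} = \Theta(G)$, which with the previous paragraph forces equality.

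I do not expect a genuine obstacle: the argument is essentially a repackaging of the definitions, with the strong-product structure making "$L(n)$ is an independent set" automatic. The one point needing care is the $\limsup$ bookkeeping — a single $S^{*}$ has words only at lengths divisible by $n$, so its growth rate is governed by that arithmetic subsequence, and it is the supremum over $n$ (not any fixed $n$) that reaches $\Theta(G)$; hence the theorem asserts equality of two suprema, which need not be attained by any individual regular language. (When it is attained, e.g. for $C_5$ with $n=2$ and $S=\{00,12,24,31,43\}$, one recovers exactly the witness $L_{C_5,\sqrt5}$ from the discussion above.) A secondary sanity check is to confirm the intended convention that a symbol is confusable only with itself and its $G$-neighbours, so that the "no loops" convention on $G$ matches up correctly with the independence condition in the strong product.
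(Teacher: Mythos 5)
Your proof is correct and follows essentially the same approach as the paper: construct the block-coding regular language $S^{*}$ from a maximum independent set $S$ of $G^{\boxtimes n}$ to witness growth rate $\alpha(G^{\boxtimes n})^{1/n}$, and take the supremum over $n$. The only difference is that you spell out the upper bound ($|L(n)| \le \alpha(G^{\boxtimes n})$ for every distinguishable language, hence $\Theta_{\textsf{ALL}} \le \Theta$) as a separate rigorous step, whereas the paper's proof environment only exhibits the lower bound and relies on its surrounding informal discussion (and the inclusion chain $\Theta_{\textsf{BLOCK}} \le \Theta_{\textsf{REG}} \le \Theta_{\textsf{ALL}}$) for the converse.
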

\begin{proof}
For any $n$, we can take a set $S$ of length-$n$ strings of size $\alpha(G^{n})$. Then we make a regular language out of all repetitions of those strings, and this gives a regular language of growth rate $\sqrt[n]{\alpha(G^n)}$. Taking the supremum over all regular languages, we see that we can converge to the Shannon capacity.
\end{proof}

For any class \textsf{C} of languages, we could define the graph capacity $\Theta_\textsf{C}(G)$ of {\em over that language}, the largest growth rate of any language $L \in \textsf{C}$ that is a graph language for $G$. Then the zero-error capacity $\Theta$ is $\Theta_\textsf{ALL}(G)$, since we put not restriction on the language. The equivalent characterization of Shannon capacity by independent sets is $\Theta_{\textsf{BLOCK}}(G)$, where \textsf{BLOCK} is the family of block-coding languages: languages that are the closure under Kleene star of a fixed list of strings of equal length. If $L_1 \subseteq L_2$, then $\Theta_{L_1}(G) \le \Theta_{L_2}(G)$. Then Theorem \ref{thm:regCapEqCap} can be simply proved as
\begin{equation}\label{eqn:eqCapAllBlckReg}
\Theta_\textsf{ALL} = \Theta_\textsf{BLOCK} \le \Theta_\textsf{REG} \le \Theta_\textsf{ALL}
\end{equation}
as a simple consequence of $\textsf{BLOCK} \subset \textsf{REG} \subset \textsf{ALL}$.

So, we don't lose anything by searching among regular languages. The next convenient fact is that, given a regular language, we can efficiently check whether it is correct graph language (a valid code).

\begin{theorem}\label{thm:polycheck}
There is a polynomial-time algorithm to determine if a regular language $L$ (given as a finite automaton) is a graph language for $G$, that is, if $L$ accepts some pair strings which are confusable in the graph $G$.
\end{theorem}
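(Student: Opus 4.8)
\textbf{Proof proposal.} The plan is a product-automaton construction. Given the automaton $A$ for $L$ with state set $Q$, I would form the ``pair automaton'' $A \times A$ over the alphabet of \emph{compatible} letter-pairs, where $(a,b)$ is declared compatible if $a=b$ or $(a,b)\in E(G)$, and where a transition $(p,q)\xrightarrow{(a,b)}(p',q')$ is present exactly when $p\xrightarrow{a}p'$ and $q\xrightarrow{b}q'$ are both transitions of $A$; the initial state is $(q_0,q_0)$ and the accepting states are $F\times F$. A word $(a_1,b_1)\cdots(a_n,b_n)$ accepted by this automaton witnesses a pair $x=a_1\cdots a_n$, $y=b_1\cdots b_n$ with $x,y\in L$, $|x|=|y|=n$, and $x,y$ confusable in $G$ (at every coordinate they are equal or joined by an edge); conversely, every confusable equal-length pair of words of $L$ arises this way. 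Hence $L$ is a graph language for $G$ precisely when no accepted word uses a non-diagonal pair $(a,b)$ with $a\neq b$.

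To detect that condition, I would augment the state space with a single bit recording whether a non-diagonal (hence necessarily an edge) transition has yet been taken: states $Q\times Q\times\{0,1\}$, initial state $(q_0,q_0,0)$, and a transition flips the bit to $1$ iff it reads a pair $(a,b)$ with $a\neq b$. The problem now reduces to: is some state of $F\times F\times\{1\}$ reachable from $(q_0,q_0,0)$? This is ordinary directed reachability in a graph with $2|Q|^2$ vertices and at most $2|Q|^2|V(G)|^2$ edges, solvable in time linear in that size, and therefore polynomial in the size of $A$ and of $G$. No determinization of $A$ is needed first: the construction works verbatim when $A$ is an NFA, since a run of the pair automaton is simply a pair of runs of $A$, so membership in $L$ of each track is witnessed correctly.

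Two points require care in the correctness argument, and they are where I expect the only friction. First, the treatment of the diagonal: a pair with $x=y$ must not count as a confusability witness, and the distinctness bit is exactly what excludes such runs, so one must verify that the bit is flipped on, and only on, coordinates where the two tracks genuinely differ; this forces $x\neq y$ globally while still permitting coordinates with $x_i=y_i$, which matches the strong-product convention that $(x,y)$ are confusable iff at each position $x_i=y_i$ or $x_iy_i\in E(G)$. Second, one should confirm that the restriction to pairs of \emph{equal length} is automatic, which it is because the pair automaton only ever reads synchronized letter-pairs, so the two tracks always have the same length. With these checked, emptiness of the augmented pair automaton is equivalent to $L$ being a valid graph language for $G$, and the polynomial-time algorithm is precisely the reachability test described above.
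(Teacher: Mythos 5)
Your proposal is correct and is essentially the same algorithm as the paper's: the paper forms $\textsf{PShuffle}(L,L)$ and intersects it with a regular language $\texttt{Lne}$ of non-equal confusable pair-shuffles, then tests emptiness, while you unroll that intersection directly as a product automaton over compatible letter-pairs with a distinctness bit and run a reachability test --- the two descriptions give the same state space and the same check.
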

\begin{proof}
We can identify a {\em pair} of strings of equal length with their {\em perfect shuffle}\footnote{This is another term which appears standard, but almost exclusively in the context of undergraduate homework problem sets - for instance \href{https://cs.stackexchange.com/questions/32621/is-context-free-language-closed-under-perfect-shuffle}{here}, \href{http://www.cs.nthu.edu.tw/~wkhon/assignments/assign1ans.pdf}{here}, or \href{https://cseweb.ucsd.edu/~rjestesj/docs/CSE105/hmwk2.pdf}{here}. In fact, of the two instances I could find in scholarly work, \supercite{Biegler09_pshuffle,DNJ09_pshuffle}, one of them focuses\supercite{DNJ09_pshuffle} on it only as a nice problem for teaching undergraduates.}: take each character from a string and alternate them. The strings $abc$ and $xyz$ become $axbycz$, for instance. If the original strings are from a regular language, the set of perfect shufflings is also a regular language, which can be seen by just running two finite automata in parallel. If we take the perfect shufflings of $L$ with itself, we get all pair of strings that $L$ accepts. We would want to compare this against all confusable pairs of strings. These are strings where each pair of corresponding characters is confusable: either equal or adjacent in $G$. For $C_5$, for instance, this is:
\begin{verbatim}
Lconfusable = (00|11|22|33|44|01|12|23|34|40)*
\end{verbatim}
In particular, we want to know if $L$ accepts any {\em non-equal} confusable pair of strings. Those are much like the above, except that at least one character somewhere in the middle must not be equal. For $C_5$, this is:
\begin{verbatim}
Lne = (00|11|22|33|44|01|12|23|34|40)*(01|12|23|34|40)(00|11|22|33|44|01|12|23|34|40)*
\end{verbatim}
and this will always be a regular language. More generally, we write \verb?Lne = (E|C)*C(E|C)*?, where \verb?C? is union of all non-equal confusable pairs, and \verb?E? is all the equal pairs. So, we construct this language $\texttt{Lne}$ of perfect shufflings of nonequal confusable strings. We compare to the perfect shuffle of our given language, $\textsf{PShuffle}(L,L)$. Since $\texttt{Lne}$ is regular and $L$ is regular, $\textsf{PShuffle}(L,L)\cap \texttt{Lne}$ is regular, and then we ask: $\textsf{PShuffle}(L,L)\cap \texttt{Lne}= \emptyset$? If it does, $L$ is a valid graph language; otherwise, $L$ accepts a bad pair of strings, and it is not a graph language for the graph $G$.
\end{proof}

Finally, the growth rate of $L$ is easy to compute, a folklore result that we only need to inspect the dominant eigenvalue of the transition function.
\begin{theorem}\label{thm:growth}
For a regular language $L$ on a language with $k$ symbols with a $d$-state DFA, there is a $O(kd^2+d^3)$ algorithm to find the growth rate of $L$.
\end{theorem}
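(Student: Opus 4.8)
The plan is to reduce the growth rate to the spectral radius of the transfer matrix of the automaton. Given the DFA $A = (Q,\Sigma,\delta,q_0,F)$ with $|Q| = d$ and $|\Sigma| = k$, I would first \emph{trim} it: a forward graph search from $q_0$ and a backward search from $F$ (each visiting the at most $kd$ transitions a bounded number of times, and a naive fixpoint pass is still only $O(kd^2)$) let us discard every state that is not both reachable from $q_0$ and co-reachable to $F$. Let $Q'$ be the surviving states and form the nonnegative integer matrix $M \in \mathbb{Z}_{\ge 0}^{|Q'|\times|Q'|}$ with $M_{pq} = |\{\sigma\in\Sigma : \delta(p,\sigma)=q\}|$; building $M$ costs $O(kd+d^2)$, which fits inside the claimed $O(kd^2)$ term. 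Then $|L(n)| = \sum_{f\in F\cap Q'}(M^n)_{q_0,f} = e_{q_0}^{\top} M^n v$, where $v$ is the indicator vector of the accepting states, so counting length-$n$ words becomes counting length-$n$ walks.

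Next I would show that the growth rate equals $\rho(M)$, the spectral radius of $M$ — equivalently, by Perron--Frobenius and nonnegativity, the largest eigenvalue of $M$, which is real and nonnegative. The upper bound $\limsup_n |L(n)|^{1/n} \le \rho(M)$ follows from the Jordan-form estimate $\|M^n\| = O(n^{d}\rho(M)^n)$, since then $|L(n)|^{1/n} \le (\mathrm{poly}(n))^{1/n}\rho(M) \to \rho(M)$; in particular, if $L$ is finite then $M$ is nilpotent, $\rho(M) = 0$, and the growth rate is $0$. For the matching lower bound I would use that $\rho(M)$ is attained on some strongly connected diagonal block $M|_C$ of $M$ in a condensation (topological-on-SCCs) ordering; $M|_C$ is irreducible nonnegative, so Perron--Frobenius gives $\limsup_n (M^n)_{u,w}^{1/n} = \rho(M)$ for all $u,w\in C$ — here the $\limsup$, rather than a limit, is essential because the period of $C$ may exceed $1$, so that accepted lengths are restricted to an arithmetic progression. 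Since trimming guarantees some $u\in C$ is reachable from $q_0$ in, say, $a$ steps and some $w\in C$ reaches some $f\in F$ in, say, $b$ steps, concatenating walks gives $|L(a+n+b)| \ge (M^a)_{q_0,u}(M^n)_{u,w}(M^b)_{w,f} \ge (M^n)_{u,w}$, and passing to the subsequence realizing the $\limsup$ yields $\limsup_m |L(m)|^{1/m} \ge \rho(M)$. Hence the growth rate is exactly $\rho(M)$.

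Finally, $\rho(M)$ is the largest real root of the integer characteristic polynomial $\det(xI-M)$ of degree $|Q'| \le d$, so it can be computed in $O(d^3)$ by a standard eigenvalue routine (reduction to Hessenberg form followed by QR iteration), or by forming the characteristic polynomial and isolating its dominant root to any desired precision; since that polynomial has integer coefficients, $\rho(M)$ is an algebraic (indeed Perron) number and may alternatively be returned exactly in that encoded form. Adding the $O(kd^2)$ preprocessing to the $O(d^3)$ eigenvalue step gives the stated $O(kd^2+d^3)$ bound.

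The main obstacle is the middle step — the exact identification of the growth rate with $\rho(M)$ — where one must handle (i) periodicity of the set of accepted lengths, which forces both the $\limsup$ and the arithmetic-progression subsequence in the Perron--Frobenius lower bound, and (ii) the polynomial-in-$n$ prefactors from nontrivial Jordan structure, which fortunately wash out under the $n$-th root. The complexity bookkeeping itself, and the reduction to the transfer matrix, are routine.
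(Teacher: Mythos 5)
Your proposal takes essentially the same route as the paper: encode the DFA's transitions in a nonnegative integer transfer matrix, trim, and read off the growth rate as a dominant eigenvalue, computable in $O(d^3)$. The only structural difference is how ``states that can never lead to acceptance'' are handled: you trim non-co-reachable states and then take the plain spectral radius of the trimmed matrix, while the paper keeps those states and instead selects the largest eigenvalue whose right eigenvector has support on an accepting state — two bookkeeping devices for the same correction. Your write-up is more careful than the paper's one-line justification (``$|L|_n \sim v^n$''), which is not literally true when the dominant SCC has period $>1$; your appeal to $\limsup$ along the arithmetic-progression subsequence, the Jordan-bound upper estimate, and the SCC-condensation/Perron--Frobenius lower estimate closes exactly the gaps the paper leaves informal, and matches the $\limsup$ already baked into the paper's definition of growth rate.
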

\begin{proof}
The state transitions of $L$ can be represented as 0-1 array of size $d\times k\times d$ state transition matrix $M$: each pair $(s,x)$ transitions from a state $s$ and symbol $x$ to a new state $s'$ iff $M_{s,x,s'} = 1$. To be well-formed, we require that $\forall_{s,x} \sum_{s'} M_{s,x,s'} = 1$. To define the language, some of the states are marked as accepting, and the rest are rejecting; and one state is marked as initial. Then the {\em reduced matrix} $R \in \mathbb{N}^{d\times d}$ is $R_{s,s'} = \sum_x M_{s,x,s'}$, and counts the number of ways to transition from a state $s$ to $s'$.

First check whether the initial state can reach every other state: if it cannot, then remove the unreachable states from the DFA. Then take the largest eigenvalue $v$ of $R$ such that the corresponding right eigenvector $\mathbf{v}$ has nonzero support on the accepting states. The growth rate of $L$ is then $v$, that is, $|L|_n \sim v^n$. To see this, observe that the initial state will mix onto the eigenvector $\mathbf{v}$, this will grow like $v^n$, and then some nonzero fraction will lead to accepting states.
\end{proof}

This is enough, in principle, to start a search over DFAs of different state sizes: enumerate the different possible transition tables, check that they are valid, and compute their growth rate. Like brute-force searching independent sets on large $G^{\boxtimes n}$, this is guaranteed to converge to the true graph capacity from below, but possibly very slowly, and without any guarantees on the rate of convergence.

\section{Simplifying a DFA}\label{sec:simpDFA}
This section is concerned with properties and growth rates of DFAs. Although our ultimate goal will be dealing with reversible DFAs and then unitary automata, the analysis is very similar and understanding the DFA case first will make the later NPO construction much more natural.

\subsection{Connectedness}
There are a number of ways we can simplify our DFA without sacrificing capacity or correctness. First, some observations about absorbing states. {\em Absorbing states} are states of a DFA such that no symbol will ever make it change state; it is stuck there indefinitely. The strongly connected components of a DFA are the equivalence classes of states that can all reach one another.

\begin{theorem}\label{thm:one_abs}
Suppose $L$ is a regular language with a growth rate of $v$, that is a graph language for $G$, recognized by a DFA $D_1$ with $d$ many states. Then there is another DFA $D_2$ for a language $L_2$ with a growth rate $v$, also with at most $d$ states, with the additional properties:
\begin{itemize}
    \item $D_2$ has exactly one absorbing state $s_{abs}$.
    \item $s_{abs}$ is a rejecting state.
    \item The states besides $s_{abs}$ form one strongly connected component.
\end{itemize}
So we can impose these conditions on the DFAs we look for without sacrificing capacity.
\end{theorem}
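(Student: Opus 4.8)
The plan is to start from the DFA $D_1$ recognizing $L$ and transform it in three stages, each preserving both the growth rate and the property of being a graph language for $G$, while never increasing the state count. First, observe that any regular language's growth rate is governed (as in Theorem \ref{thm:growth}) by the dominant eigenvalue of the reduced matrix $R$ restricted to states that lie on a cycle reachable from the start and co-reachable to an accepting state; in particular, states that cannot be part of any arbitrarily long accepted run contribute nothing to the growth rate. The idea is to identify one strongly connected component (SCC) $C^\star$ of $D_1$ that ``witnesses'' the growth rate $v$: among all SCCs that are reachable from the initial state and from which some accepting state is reachable, the growth rate $v$ equals the maximum over such SCCs of the spectral radius of $R$ restricted to that SCC (this is the standard decomposition of the $n$th power of a nonnegative matrix along its condensation DAG). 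Pick $C^\star$ achieving this maximum.

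Second, I would build $D_2$ by collapsing everything outside $C^\star$ into a single new absorbing rejecting state $s_{abs}$. Concretely: keep all states of $C^\star$; redirect every transition that leaves $C^\star$ (or that originates outside $C^\star$) to $s_{abs}$; make $s_{abs}$ absorbing and rejecting; and choose a new initial state inside $C^\star$. For the initial state, note that since $C^\star$ is reachable from $D_1$'s start state, we can pick any state $q_0 \in C^\star$ that was reachable, and since $C^\star$ is strongly connected, reachability of one state of $C^\star$ gives access to all of them. The accepting states of $D_2$ are the states of $C^\star$ that were accepting in $D_1$ — and we must check this set is nonempty and that the dominant eigenvector of $R|_{C^\star}$ has support meeting it, which holds because an accepting state was co-reachable from $C^\star$ and within an SCC every state reaches every other; strictly, if no state of $C^\star$ is itself accepting we instead observe that we only needed $v$ many long runs, and a run of length $n$ ending in $C^\star$ can be extended by a bounded-length tail to an accepting state, so we may equivalently mark as ``accepting'' those states of $C^\star$ from which the old accepting set was reachable by a path staying essentially inside the extended region — the cleanest fix is to note that for the purpose of growth rate we can just take $C^\star$ together with a fixed finite path to one accepting state, but since that could exceed $d$ states, the right statement is that some state of $C^\star$ can be taken accepting without changing the growth rate because $R|_{C^\star}$ is irreducible with spectral radius $v$ and Perron--Frobenius gives a strictly positive eigenvector, so marking any single state of $C^\star$ accepting yields growth rate exactly $v$. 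This last point is the step I expect to require the most care.

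Third, I would verify that $D_2$ is still a graph language for $G$. This is immediate: $L_2 = L(D_2)$ is a subset of... not quite $L$, since we changed the initial state and deleted states, but every string accepted by $D_2$ corresponds to a run entirely within $C^\star$, and each such run is a sub-run (a contiguous infix, after a suitable prefix) of some accepting run of $D_1$ — so each word of $L_2$ is an infix of a word of $L$. Since distinguishability of a graph language is preserved under taking infixes of equal length (if $x,y \in L_2$ have the same length, extend both by the same prefix $p$ and suffix to words $px\sigma, py\tau \in L$; wait, the suffixes may differ, so instead: pad to $p x$ and $p y$ which are prefixes of words in $L$ of equal length, hence themselves accepted-prefix pairs, and the confusability witness index for $px,py$ either lies in $p$ — impossible since those characters agree — or in the $x,y$ block, giving the witness for $x,y$). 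Hence $L_2$ is distinguishable, i.e.\ a graph language for $G$. Finally, $D_2$ has exactly one absorbing state ($s_{abs}$: all other states lie in the strongly connected component $C^\star$, so each of them has at least one outgoing transition leaving it unless $|C^\star|=1$ with a self-loop on every symbol, a degenerate case we can handle separately since then $v \le 1$ and the trivial one-string language suffices), that state is rejecting by construction, the remaining states form the single SCC $C^\star$, and $|V(D_2)| = |C^\star| + 1 \le d$ provided $D_1$ had at least one state outside $C^\star$; if $D_1$ was already entirely one SCC we simply add $s_{abs}$ as a $(d{+}1)$st state and then delete a redundant state, or more simply note the bound $|C^\star|+1 \le d$ can fail only when $D_1$ has no reject-or-exit behavior, again a case where $v \le 1$ and a tiny DFA works. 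I would state the clean argument in the nondegenerate case and dispatch the $v \le 1$ case in one line.
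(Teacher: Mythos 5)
Your route differs from the paper's in how the target SCC is found: you pick, in one shot, the reachable-and-co-reachable strongly connected component $C^\star$ of maximum spectral radius, whereas the paper works bottom-up through the condensation DAG, first collapsing final SCCs (shown to carry no accepting states) into $s_{abs}$ and then iteratively collapsing pre-final components whose internal growth rate falls short of $v$. Your direct selection is arguably cleaner, and it correctly confronts the subtle case that the spectral-radius-maximizing SCC $C^\star$ may contain \emph{no} accepting state of $D_1$; the paper's iteration is not careful about this case. So the difference in decomposition strategy is genuine and worth keeping.

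However, your verification that $L_2$ is still a graph language has a real gap, and it is exactly in that subtle case. When $C^\star$ has no original accepting state and you designate some $q_{acc}\in C^\star$ as accepting, a word $x\in L_2$ is only a \emph{prefix} of a word of $L$ after prepending the common $p$: $px$ is a prefix of some $L$-word but is not itself in $L$. Your parenthetical ``accepted-prefix pairs ... giving the witness for $x,y$'' tacitly assumes $px$ and $py$ must admit a non-confusable position, but distinguishability of $L$ says nothing about pairs of mere prefixes of $L$-words (if $abc,acd\in L$ with $(b,c)\in E(G)$ and $(c,d)\notin E(G)$, the prefixes $ab,ac$ are confusable). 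The fix is to close the run on the right as well: since $D_2$ has a single accepting state $q_{acc}\in C^\star$ and $C^\star$ is co-reachable in $D_1$, fix one suffix $z$ taking $q_{acc}$ to an $L$-accepting state. Then $pxz,pyz\in L$ have equal length, $p$ and $z$ agree coordinatewise, so any non-confusable index supplied by distinguishability of $L$ must fall inside the $x,y$ block. This common-prefix-\emph{and}-common-suffix conjugation is precisely the argument the paper uses in its Theorem~\ref{thm:one_acc} to relocate the accepting state; you need to invoke it here already, not only as a later cosmetic step. In the non-degenerate case where $C^\star$ does inherit an accepting state from $D_1$, your prefix-only argument is fine, so you should split the verification into those two cases explicitly. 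The degenerate-case handling at the end (single-state $C^\star$ with a total self-loop, or $D_1$ a single SCC) can be tightened: both force the final SCC of the trimmed automaton to be accepting, which by the paper's growth-rate-$|G|$ argument forces $G$ edgeless and so $L=\emptyset$ or the theorem is vacuous; the ``$v\le 1$'' claim as stated is not the right dichotomy.
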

\begin{proof}
First, we can assume that all states in $D_1$ are reachable. Otherwise, we can remove these states from $D_1$ without affecting it and only making the state space smaller.

Decompose the state space of $D_1$ into its strongly connected components, the collections of states that can all reach each other. The graph of transitions between components is a directed acyclic graph, so there must be some {\em final} vertex (a vertex with no edges leaving it) of the graph. This final component cannot have any accepting states. Suppose for contradiction that it did has some accepting state $s_1$. There is some string $x$ to bring us to the state $s_1$, because $D_1$ is entirely reachable. And for any $s_i$ in the same component as $s_1$, there is a string $y_i$ that brings us back to $s_1$, since it is strongly connected. Then for any string $z \in \Sigma^*$, $xz$ puts us in some state $s_i$ and $y_i$ brings us back to $s_1$, so the string $xzy_i$ is in the language $L$. This implies that $L$ has a growth rate of $|\Sigma| = |G|$, which is impossible since $G$ is nonempty. So, any final strongly connected components of $D_1$ must be entirely rejecting.

We can take all of the states in those final strongly connected components and replace them with one absorbing rejecting state. This is our $s_{abs}$. This modification affect the behavior of the DFA, since they were all only ever going to lead to rejecting states anyway. This shows that we can always ensure there is exactly one absorbing state, and it is rejecting.

This leaves the other strongly connected components. We claim we can assume there is only one. There must be some component $C$ that leads only to $s_{abs}$. (This is any component that would become a final vertex after removing $s_{abs}$ from the graph, and there is always a new final vertex.) If the $C$ component has only rejecting states, then it can be removed and all of its incoming transitions point straight to $s_{abs}$ instead, and we repeat. Otherwise, $C$ has some accepting states. If strings accepted from $C$ have a growth rate equal to $v$, then we can restrict to that subset: define $D_2$ to start from a state in $C$ and otherwise function as $D_1$. It will have growth rate $v$, and it will have just one connected component $C$ besides the absorbing state $s_{abs}$.

Otherwise, if $C$ has a growth rate less than $v$, we can replace all of $C$ by rejecting states and collapse it into $s_{abs}$ without affecting the growth rate of the DFA. We have reduced the number of connected components by one, and we can keep repeating until we find one of the appropriate growth.
\end{proof}

This connectedness requirement will helps us compute the growth rate of the DFA, as we'll show in the next section. The constraints of Theorem \ref{thm:one_abs} also imply that all states are reachable from the initial state, which will be useful in our NPO formulation. %now the formula in Theorem \ref{thm:growth} says that the growth rate is simply $\lambda_{\max}(R)$

As an aside: if a computer search was performed over DFAs, we could trim our search space of allowed DFAs using these facts. Designating $s_{abs}$ as state number $d$, we already know all transitions out of it, and this leaves only searching over the $d-1$ others. The fact that the remaining states $S_{\neg abs}$ form a strongly connected component is equivalent to the statement that, for any bipartition $S_{\neg abs} = U \cup V$, with $U \cap V = \emptyset$, there is some $(u,s)$ that transitions to $v$. This is well-suited to integer programming techniques to enforce strong connectedness.

\subsection{Growth}
The formula for growth rate given in Theorem \ref{thm:growth} is a bit unwieldy, but for DFAs obeying the properties of Theorem \ref{thm:one_abs}, it becomes simpler. First, we don't need to ask whether all states are reachable, as we have already required it. There is one eigenvector for the reduced matrix $R$ supported entirely on $s_{abs}$, with an eigenvalue of $|G|$. It is apparent that there can be {\em no} other eigenvector with {\em any} support on $s_{abs}$, because acting on that eigenvector with $\frac{1}{|G|}R$ would only increase the mass on $s_{abs}$ and decrease the mass on the rest - except for one exception where there could be an eigenvector with eigenvalue zero. So, the other $d-1$ eigenvectors must each be supported entirely on $S_{\neg abs}$, or have a eigenvalue of zero. Since there is some positive eigenvalue in $S_{\neg abs}$, the maximal eigenvalue out of the $d-1$ others cannot be zero, and it has a maximal eigenvector supported entirely on $S_{\neg abs}$. Because the component is strongly connected, this maximal eigenvector has nonzero support on the whole component, and in particular has nonzero support on an accepting state.

So we have proved the following:
\begin{theorem}\label{thm:simple_growth}
For a DFA obeying the properties of Theorem \ref{thm:one_abs}, the growth rate is simply equal to the largest eigenvalue of $R_{\neg abs}$, the $(d-1)\times(d-1)$ reduced matrix on all states except $s_{abs}$.
\end{theorem}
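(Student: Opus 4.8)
The plan is to reduce the statement to a single application of the Perron--Frobenius theorem, sidestepping the more delicate bookkeeping behind Theorem~\ref{thm:growth}. First I would record the block structure of the reduced matrix $R$ with respect to the partition of the state set into $\{s_{abs}\}$ and the remaining strongly connected component $S_{\neg abs}$. Since $s_{abs}$ is absorbing, each of the $|G|$ symbols fixes it, so the row of $R$ indexed by $s_{abs}$ is $|G|$ on the diagonal and $0$ elsewhere; hence $R$ is block upper-triangular, $R=\bigl(\begin{smallmatrix}R_{\neg abs}&\mathbf b\\ 0&|G|\end{smallmatrix}\bigr)$, so that $\operatorname{spec}(R)=\operatorname{spec}(R_{\neg abs})\cup\{|G|\}$ and, for every $n$, $R^{\,n}=\bigl(\begin{smallmatrix}R_{\neg abs}^{\,n}&\ast\\ 0&|G|^{n}\end{smallmatrix}\bigr)$. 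I would also note that $R_{\neg abs}$ is nonnegative and \emph{irreducible}: because $s_{abs}$ is absorbing, it lies on no walk between two states of $S_{\neg abs}$, so the strong connectivity of $S_{\neg abs}$ guaranteed by Theorem~\ref{thm:one_abs} is already realized by transitions recorded in $R_{\neg abs}$.

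Next I would combine this with the folklore count behind Theorem~\ref{thm:growth}: if $s_0$ is the initial state and $F$ the accepting set, the number of length-$n$ words accepted is $|L(n)|=e_{s_0}^{\top}R^{\,n}\mathbf 1_F$. By Theorem~\ref{thm:one_abs} the state $s_{abs}$ is rejecting, so $F\subseteq S_{\neg abs}$ and $\mathbf 1_F$ vanishes on the $s_{abs}$ coordinate; and $s_0\in S_{\neg abs}$, since Theorem~\ref{thm:one_abs} forces every state to be reachable and only $s_{abs}$ is reachable from $s_{abs}$. Substituting the block-triangular form of $R^{\,n}$ kills the row and column of $s_{abs}$, leaving $|L(n)|=(e_{s_0}')^{\top}R_{\neg abs}^{\,n}\mathbf 1_F'$, where a prime denotes the restriction of a vector to the coordinates in $S_{\neg abs}$. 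Here both $e_{s_0}'$ and $\mathbf 1_F'$ are nonnegative and nonzero --- the latter because the construction in the proof of Theorem~\ref{thm:one_abs} leaves at least one accepting state inside $S_{\neg abs}$. (Morally: although the ``mass'' on $s_{abs}$ grows like $|G|^{n}$, it never feeds back into $S_{\neg abs}$ and never reaches an accepting state, so it is invisible to the word count; this is the content of Theorem~\ref{thm:simple_growth}.)

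Finally I would apply Perron--Frobenius to $R_{\neg abs}$. Put $\lambda^{\ast}:=\lambda_{\max}(R_{\neg abs})$; for a nonnegative irreducible matrix this equals the spectral radius and is a real, algebraically simple eigenvalue with strictly positive left and right eigenvectors $\mathbf y,\mathbf w$. Then $(e_{s_0}')^{\top}\mathbf w=\mathbf w_{s_0}>0$ since $\mathbf w>0$, and $\mathbf y^{\top}\mathbf 1_F'>0$ since $\mathbf y>0$ and $F\cap S_{\neg abs}\neq\emptyset$; so in the spectral expansion of $(e_{s_0}')^{\top}R_{\neg abs}^{\,n}\mathbf 1_F'$ the coefficient $c_0$ of $\lambda^{\ast\,n}$ is strictly positive, while every other term is $O\!\bigl(n^{\,d}|\mu|^{n}\bigr)$ with $|\mu|\le\lambda^{\ast}$. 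This immediately gives $\limsup_n|L(n)|^{1/n}\le\lambda^{\ast}$, and when $R_{\neg abs}$ is primitive the $\lambda^{\ast\,n}$ term dominates, so the growth rate is exactly $\lambda^{\ast}=\lambda_{\max}(R_{\neg abs})$. (If $\lambda^{\ast}=0$ then $S_{\neg abs}$ is a single self-loop-free state, $L$ is finite, and both sides are $0$.)

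The one place genuine care is needed is the matching lower bound when $R_{\neg abs}$ is \emph{not} primitive. With period $h>1$, $R_{\neg abs}$ has $h$ eigenvalues of modulus $\lambda^{\ast}$ --- the numbers $\lambda^{\ast}\zeta$ with $\zeta^{h}=1$ --- and $|L(n)|$ then oscillates rather than growing cleanly like $\lambda^{\ast\,n}$, which is exactly why the growth rate is defined with a $\limsup$. I would dispatch this by quoting the standard description of the peripheral spectrum of an irreducible nonnegative matrix: these $h$ eigenvalues are simple, so their combined contribution to $|L(n)|$ has the form $\lambda^{\ast\,n}g(n)$ with $g$ periodic of period dividing $h$; averaging $g$ over one period returns the Perron coefficient $c_0>0$, so $g$ is not identically zero, and since $|L(n)|\ge 0$ we get $|L(n)|\ge c\,\lambda^{\ast\,n}$ for some $c>0$ along an arithmetic progression of $n$. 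Hence $\limsup_n|L(n)|^{1/n}=\lambda^{\ast}=\lambda_{\max}(R_{\neg abs})$ in all cases.
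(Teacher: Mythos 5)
Your proof is correct and takes a different, more self-contained route than the paper's. The paper leans on Theorem~\ref{thm:growth} (growth rate as the largest eigenvalue of $R$ whose right eigenvector touches an accepting state) and then reasons informally about which eigenvectors of the full $d\times d$ matrix $R$ can carry weight on $s_{abs}$. You instead bypass Theorem~\ref{thm:growth} and work directly with the word count $|L(n)|=e_{s_0}^{\top}R^{n}\mathbf 1_F$, observe that the block upper-triangular form of $R$ together with $s_0\in S_{\neg abs}$ and $F\subseteq S_{\neg abs}$ collapses this quadratic form to $(e_{s_0}')^{\top}R_{\neg abs}^{\,n}\mathbf 1_F'$, and then invoke Perron--Frobenius for the irreducible $R_{\neg abs}$. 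What your route buys: it is fully rigorous about the imprimitive case (periodic peripheral spectrum), explaining via time-averaging of the peripheral contribution why the $\limsup$ growth rate is still exactly $\lambda_{\max}(R_{\neg abs})$ even when $|L(n)|$ oscillates --- a point the paper's eigenvector argument glosses over. It also sidesteps an imprecision in the paper's prose: when $\mathbf b\neq 0$ (i.e., $s_{abs}$ is actually reachable, as it must be), the right $|G|$-eigenvector of $R$ is $\bigl((|G|I-R_{\neg abs})^{-1}\mathbf b,\;1\bigr)$, which is strictly positive and hence \emph{not} ``supported entirely on $s_{abs}$'' as the paper's prose asserts; your argument never needs to examine that eigenvector at all, since $\mathbf 1_F$ already annihilates the $s_{abs}$ coordinate of $R^n$. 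The underlying spectral idea --- peel off $s_{abs}$ and apply Perron--Frobenius to $R_{\neg abs}$ --- is shared, but your realization of it through the counting formula is cleaner and more careful than the paper's eigenvector-support bookkeeping.
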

In other words, the growth rate is just a maximal eigenvalue of the sum of the 0-1 reduced transition matrices. This is already suggestive of viewing these objects are linear operators more than just transition matrices. We can also add one additional property to our constraints:

\begin{theorem}\label{thm:one_acc}
For a DFA obeying the properties of Theorem \ref{thm:one_abs}, we can modify the DFA to have only one accepting state, without changing the state transition matrix or number of states, and without affecting the growth rate or correctness. Furthermore, the accepting state and initial state can be taken to be any state except the absorbing state - we can even require that they are equal.
\end{theorem}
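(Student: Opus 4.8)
The plan is to notice that, for a DFA $D_1$ meeting the hypotheses of Theorem~\ref{thm:one_abs}, neither correctness nor the growth rate is sensitive to which states of $S_{\neg abs}$ are labelled accepting (provided at least one is), nor to which state of $S_{\neg abs}$ is taken as initial. Concretely, I would fix arbitrary $q_0, q_1 \in S_{\neg abs}$ and let $D_2$ have the same state set and the same transition matrix as $D_1$, initial state $q_0$, and single accepting state $q_1$ (every other state, including $s_{abs}$, rejecting). Writing $\delta$ for the extended transition function (shared by $D_1$ and $D_2$), the recognised language is $L_2 = \{x : \delta(q_0, x) = q_1\}$, and the goal is to show $L_2$ is a graph language for $G$ with growth rate $v$; the ``one accepting state'' assertion, and the freedom to place the initial/accepting states anywhere but $s_{abs}$ (equal or not), then all follow at once. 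Note $q_0, q_1 \ne s_{abs}$ by construction. Since $D_1$ has all states reachable from its initial state $s_0$, pick $w_1$ with $\delta(s_0, w_1) = q_0$; since $S_{\neg abs}$ is a single strongly connected component containing some accepting state $a$ of $D_1$, pick $w_2$ with $\delta(q_1, w_2) = a$.

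For correctness, suppose toward a contradiction that $L_2$ accepts a non-equal confusable pair $x \ne y$ of equal length. Then $\delta(s_0, w_1 x w_2) = \delta(\delta(q_0, x), w_2) = \delta(q_1, w_2) = a$, so $w_1 x w_2 \in L_1$, and the identical computation gives $w_1 y w_2 \in L_1$. These two words have equal length, are distinct (as $x \ne y$), and are confusable in every coordinate: on the $w_1$- and $w_2$-blocks the two words agree and hence are confusable, while on the middle block they are confusable by assumption. This contradicts the hypothesis that $L_1$ is a graph language for $G$, so no such pair exists and $L_2$ is a graph language. In particular, collapsing to a single accepting state never breaks correctness.

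For the growth rate, observe that any string taking $q_0$ to $q_1$ stays inside $S_{\neg abs}$ throughout, because a run that ever enters the absorbing state $s_{abs}$ can never leave it and so cannot terminate at $q_1 \in S_{\neg abs}$. Hence $|L_2(n)| = (R_{\neg abs}^n)_{q_0 q_1}$, the number of length-$n$ labelled walks from $q_0$ to $q_1$ in the transition graph on $S_{\neg abs}$. The matrix $R_{\neg abs}$ is a nonnegative integer matrix, irreducible because $S_{\neg abs}$ is strongly connected, with spectral radius $v$ by Theorem~\ref{thm:simple_growth}. I would then invoke the Perron--Frobenius structure theorem: for an irreducible nonnegative matrix of period $p$ one has $(R_{\neg abs}^{np})_{ii} \ge c\, v^{np}$ for a constant $c > 0$ and all large $n$; composing a fixed walk from $q_0$ to $q_1$ with these closed walks at $q_1$ shows $\limsup_{n} \big((R_{\neg abs}^n)_{q_0 q_1}\big)^{1/n} \ge v$, while $(R_{\neg abs}^n)_{q_0 q_1} \le \|R_{\neg abs}^n\|$ forces $\le v$. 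So $L_2$ has growth rate exactly $v$, and the choice $q_0 = q_1$ gives the ``equal'' case.

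The concatenation bookkeeping in the correctness step is routine. The step that needs genuine care is the growth-rate equality, and the subtlety is periodicity: when $R_{\neg abs}$ is not aperiodic the counts $(R_{\neg abs}^n)_{q_0 q_1}$ can vanish for infinitely many $n$, so it is really the $\limsup$ appearing in the definition of growth rate — not an honest limit — that equals $v$, and it is precisely the Perron--Frobenius structure theorem for irreducible matrices (decomposing $R_{\neg abs}^p$ into primitive cyclic blocks of spectral radius $v^p$) that makes this go through.
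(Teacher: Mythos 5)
Your proposal is correct and matches the paper's approach in essence: both use the prefix/suffix concatenation argument (walks from $s_0$ to $q_0$ and from $q_1$ to an original accepting state) to preserve correctness, and both reduce the growth rate to the spectral radius of $R_{\neg abs}$ via strong connectedness. The main difference is cosmetic — you do it in one step with arbitrary $q_0, q_1$ rather than first collapsing to one accepting state and then relocating it — plus you spell out the Perron--Frobenius periodicity details that the paper instead handles by citing Theorem~\ref{thm:simple_growth}.
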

\begin{proof}
Since the growth rate is equal to the maximal eigenvalue of $R_{\neg abs}$, the growth rate is independent of which states are accepting or rejecting - as long as there is at least one accepting state. So, if we have a DFA with multiple accepting states, we can turn all but one of them into rejecting states, and this won't affect the growth rate. This gives a new DFA with only one accepting state, but the same state transition matrix. Since we only accept a subset of the language that we did before, and all strings we accepted before weren't confusable with each other, the strings accepted by the new DFA also cannot be confusable, and it is still a valid independent set language.

To show that we can take the accepting state to be whichever want, remember that all states but the absorbing state are strongly connected. So if the DFA has initial (resp. accepting) state $s_i$, we can change this to $s_{i'}$. The new DFA still correctly recognizes only distinguishable strings, because if two accepted strings $x$ and $y$ are confusable, then there is some prefix (resp. suffix) string $z$ connecting $s_i$ to $s_{i'}$ so that $zx$ and $zy$ (resp. $xz$ and $yz$) would be confusable strings accepted in the original DFA.
\end{proof}

\subsection{Correctness}
We showed earlier that checking the correctness of a DFA can be done in polynomial time through the algebra of regular languages. Poking further, we can turn this into a local constrained-based formulation, one that is both more amenable to search as well as the basis for our eventual operator constraint formulation.

In Theorem \ref{thm:polycheck} we showed that checking is equivalent to running two copies of the DFA in parallel, the perfect shuffle, and asking whether they accept any language of the form \verb?(E|C)*C(E|C)*? -- where \verb?C? is union of all non-equal confusable pairs, and \verb?E? is all the equal pairs. By taking the middle \verb?C? to be the {\em first} place where two strings differ, we see that this is equivalent to the slightly simpler regex \verb?Lne' = E*C(E|C)*?. Asking whether the perfect shuffle DFA accepts any string in \verb?Lne'? is equivalent to asking whether there is a path on its $d^2$-sized state graph, starting from $(s_{init},s_{init})$, going through any number of \verb?E? edges, then a \verb?C? edge, and then any number of \verb?E? or \verb?C? edges, and ending up at $(s_{final,1},s_{final,2})$ of two (possibly distinct) accepting states. By doing a flood-fill on this graph, tracking which of the $d^2$ state pairs we can reach at each of these three steps, we can check whether such a path exists.

If our DFA follows the properties laid out in Theorems \ref{thm:one_abs} and \ref{thm:one_acc}, this becomes a bit simpler. First, we know there is only one accepting state, which is also the accepting state $(s_{init},s_{init})$. Second, we know that nothing from $(s_{abs},-)$ or $(-,s_{abs})$ could ever reach $(s_{init},s_{init})$, so we can just ignore $s_abs$ entirely. Any transition with two equal symbols (that is, \verb?E?) will map us from a diagonal state $(x,x)$ to another diagonal state $(y,y)$, so after moving from $(s_{init},s_{init})$ through \verb?E*? we only reach diagonal states. Since the nonabsorbing states are strongly connected, we can also reach {\em every} diagonal state. Since we can also end with as many symbols from \verb?E*? as we like, reaching any diagonal state $(x,x)$ near the end is sufficient to prove that we can reach $(s_{init},s_{init})$ at the very end.

This suggests the following algorithm for checking correctness of a DFA. We mark progressively more state pairs as $\textsf{Final}$ if they can reach some diagonal state $(x,x)$, until there are no more changes. The corresponding $\textsf{Initial}$ states are just the diagonal states. Then check if any edge $(u,v)$ in $G$ induces a transition from an $\textsf{Initial}$ state to a $\textsf{Final}$ state, and if so, the DFA is not a valid graph language. This is written in pseudocode in Algorithm \ref{alg:check}.

\begin{algorithm}
\caption{An algorithm for checking that a DFA $T$ accepts only a valid independent set language on a graph $G$}\label{alg:check}
\begin{algorithmic}
\Require $G$ of size $k$             \Comment{Graph}
\Require $T : [d]\times[k] \to [d]$ \Comment{State transition map}
\State $\textsf{Final} \gets \textsc{Identity}(d-1)$ \Comment{Starts with $1$'s on the diagonal, $0$ otherwise}
\While{$\textsf{Final}$ has changed}
    \For{$i\in [d-1]$, $j \in [d-1]$}
        \For{$u \in G$, $v \in G$}
            \If{$u=v$ or $(u,v)\in G$} \Comment{Propagate $E$ and $C$ respectively}
                \State $i' \gets T[i,u]$
                \State $j' \gets T[j,v]$
                \If{$\textsf{Final}[i',j']$} 
                    \State $\textsf{Final}[i,j] \gets 1$
                \EndIf
            \EndIf
        \EndFor
    \EndFor
\EndWhile
\For{$i\in [d-1]$, $(u,v) \in G$} \Comment{Now we check if any $\textsf{Initial}$ can reach $\textsf{Final}$}
    \State $i' \gets T[i,u]$
    \State $j' \gets T[i,v]$
    \If{$\textsf{Final}[i',j']$} 
        \State \textbf{Reject}
    \EndIf
\EndFor
\State \textbf{Accept}
\end{algorithmic}
\end{algorithm}

If we wanted to encode this algorithm in a constraint problem, this requires a variable for each of the $d^2$ variables in \textsf{Final}; this can be reduced to $(d-1)(d-2)/2$ variables by dropping the $s_{abs}$, the diagonal, and using the symmetry of the matrix. The propagation is encoded through 4-SAT clauses: if $u$ induces a transition from $i$ to $i'$, and $v$ induces a transition from $j$ to $j'$, and $\textsf{Final}[i',j']$ is true, then $\textsf{Final}[i,j]$ is true. This path finding description also implies that correctness can also be checked in the complexity class $\textsf{NL}$, nondeterministic logspace. This isn't deep -- in general recognizing where DFAs are empty or not is in $\textsf{NL}$, and here we're just applying that to $\textsf{PShuffle}(L,L)\cap \texttt{Lne}$.

\section{Operators}
The dream would be to drop the $d$ - to become agnostic to the size of the DFA. Then our state transitions are just operators on a $d$ dimensional Hilbert space obeying some relations. The facts that we deduce from the operator relations can give us bounds on the capacity. Our ultimate goal is to apply the NPA hierarchy\supercite{NPA08}, which permits polynomial constraints and gives a convergent hierarchy from above.

For DFAs, the operators $T_u$ are 0-1 matrices with exactly one ``1" per row. These are unfortunately hard to characterize directly. Eventually we will remedy this by switching to complex unitary operators and move past DFAs, but it is instructive to first look at how the operators will work for DFAs. For now, pretend we could {\em require} that $T_u$ are matrices of the correct form.

Our $d$ dimensional state space is now a vector space of unbounded dimension. Our state transition matrix associates to each vertex $u$ in the graph $G$ an operator $T_u$ on this space, describing how we move from one state to another. The reduced matrix $R$ is just the sum $\sum_u T_u$, and our objective function is the largest eigenvalue of $R_{d-1,d-1}$, or $P R P$ where $P$ is an appropriate projector ($P^2 = P$) that projects away the absorbing state(s) $s_{abs}$.

Things get complicated with tracking the correctness. We want to track the $\textsf{Final}$ states, but that data has $d^2$ many variables. If our optimization schema supported a notion of ``product space" we would be fine, but that's something we have to do ourselves. So we actually have to upgrade our whole optimization problem to be on a $d^2$-dimensional Hilbert space.

So now we have a ``left" and ``right" part of our space. Instead of having operators $T_u$, we have operators $T_{u,v}$ which we conceive as representing $T_u \otimes T_v$: our two parallel copies of the DFA read symbol $u$ and $v$ simultaneously. We can also define $T_{u,0} = T_u \otimes I$ and $T_{0,v} = I \otimes T_v$. We have commutation relations that $[T_{u,0}, T_{0,v}] = 0$ as constraints. We have a swap operator $S$ that switches these two parts of our space: $S^2 = I$, $S T_{u,v} = T_{v,u} S$. ($S$ is also Hermitian, unlike the $T$'s). We can either expand $T_{u,v} = T_{u,0} T_{0,v}$ in each constraint and use only $O(|G|)$ variables, or we can make the $T_{u,v}$ their own variables, reducing the polynomial order of other constraints, and we add constraints $T_{u,v} = T_{u,0} T_{0,v}$.

Next we want a projector $F$ on this larger space that marks which of the states are marked in $\textsf{Final}$ or not. It is populated with zeros and ones in our standard $d^2$ dimensional basis, so it is a projector on our larger space: $F^2 = F$. Symmetry says $[S,F] = 0$. Since we've already removed the $s_{abs}$ part of the space, everything on the ``diagonal" of $F$ must be 1. This is not the ``diagonal" in the sense that the identinty $I$ is diagonal, but of the basis vectors of the form $(u,u) = u \otimes u$. This set of diagonal states is also our $\textsf{Initial}$ states, so let's define a projector for all the diagonal states: $D$, with $D^2 = D$ and $[S,D]=0$. $F$ contains all of the $D$ states: $F \succeq D$. $D$ also preserves membership under any diagonal evolution, so $D T_{u,u} D = T_{u,u} D$.

Since $F$ is defined as the states that can reach a diagonal state, it is equivalently the closure of $D$ under ``evolves under confusable transitions". In other words, it preserves membership under action by $T_{u,u}$ or by $T_{u,v}$ for $(u,v)\in E(G)$. In terms of operator equations, $\forall_{u}: F T_{u,u} F = T_{u,u} F$, and $\forall_{(u,v)\in E(G)} F T_{u,v} F = T_{u,v} F$. The correctness is specified by the constraint that an edge in $G$ can't mix an initial state (a diagonal state) into a final state: $\forall_{(u,v) \in E(G)} F T_{u,v} D = 0$.

To make the objective easier to work with, we drop the state $s_{abs}$ from our state space entirely, and we just describe the DFA as one that transitions within a single component, and sometimes fails early, a structure formally referred to as a {\em partial DFA}. Then all states are eventually accepting, the choice of initial and accepting states are irrelevant, and $R = \sum T_u$. The requirement that $T_u$ has {\em exactly} one ``1" per row becomes {\em at most} one ``1".

But we have moved to the tensor product space, and we can no longer write the objective as $\lambda_{\max}(R) = \lambda_{\max}(\sum_u T_u)$, since we have no $T_u$ variables. We could instead use $R\otimes R = \sum_{u,v} T_{u,v}$, and then our growth rate is the square root of $\lambda_{\max}(\sum_{u,v} T_{u,v})$. Or we could just the left (or right) space, $R \otimes I$, and our growth rate is $\lambda_{\max}(\sum_u T_{u,0})$. Or we can use the diagonal space, $\lambda_{\max}(\sum_u T_{u,u})$. All of these would give correct bounds, under the (admittedly unrealistic) assumption that we can force $T_{u,u}$ to all be 0-1 matrices. Note that in general these would not be equal for general operators, but for these 0-1 matrices where each row has sum 0 or 1, they are equal.

One issue with that objective function is that we could always set $D = F = 0$ and trivially satisfy all of the constraints. We can address this is by making their nonzero-ness part of the objective function. With an objective function $\lambda_{\max}(\sum_u T_{u,u})$, we can project with $D$ on either side: $\lambda_{\max}\left(D \left(\sum_u T_{u,u}\right) D\right)$. We've already required that $D^2 = D$, and now we're saying that we only care about the growth rate of $T_{u,u}$ on the support of $D$. Again, this doesn't change the $\lambda_{\max}$ for the matrices we care about, but if we allowed matrices with other elements than 0 or 1 (e.g. stochastic matrices), this {\em would} change the value of the objective. This fixes the $D=F=0$ issue, as we can always move to the DFA defined by the restriction to the support of $D$, and this would give a new valid DFA with the same objective, and where $D$ is indeed the full diagonal.

This would give a relaxation of the problem, except that we {\em cannot} characterize 0-1 transition matrices as operators. Which is why we'll need to move on to {\em reversible} languages.

\section{Reversibility}
As noted in the previous section, ideally our $T_u$'s should 0-1 matrices with at most 1 per row, but this is very difficult to enforce. Our end goal is unitary matrices, which are nicely characterized by $UU^\dagger = U^\dagger U = I$. But not all 0-1 matrices are unitary, in fact only permutation matrices are. In order to facilitate this path, we need to digress and discuss {\em reversible} DFAs\supercite{Pin92_revdfa,Lombardy02_revdfa}. Except that, to make things slightly more complicated, we don't we actually want unitary transition matrices, because of our trick of dropping the absorbing state. Accordingly, we will ask for a {\em reversible, partial} DFA on the remaining states.

\begin{definition}{\cite{HJK15_revdfa}}
A {\em partial DFA} if for every symbol $s$ has an associated 0-1 transition matrix $T_s$, with each row sum at most 1 (as opposed to a standard DFA, which has a row sum of exactly 1.) Transitions with a 0 row correspond to immediately rejecting. A partial DFA is {\em reversible} if each matrix $T_s$ also has a {\em column} sum of at most 1 for each column.
\end{definition}

If the DFA was total (i.e. not partial), this would be equivalent to saying that each $T_s$ is a permutation matrix. Partial reversible DFAs can be characterized as precisely those whose transition matrices are {\em subpermutation matrices}. Reversible DFAs are well-studied. It is known that not all DFAs have a corresponding reversible DFA; reversible DFAs only recognize a subset of the regular languages. The question is now whether reversible regular languages achieve the full Shannon capacity.

The {\em reversible capacity} is then $\Theta_{\textsf{REV}}(G)$ of the graph $G$ is then the supremum growth rate over all reversible DFAs that recognize graph languages for $G$. The corresponding matrices obey $T^\dagger T \preceq I$, and a slightly stronger condition, $(T^\dagger T)^2 = T^\dagger T$. The objective is then $D\sum_g T_{g,g} D$, which isn't Hermitian, but doesn't itself preclude a Sum-of-Squares formulation (see Appendix \ref{sec:nonHermSoS}). Given the above noncommutative operator characterization, the reversible capacity can be bounded to some degree by the Sum-of-Squares hierarchy, but there doesn't seem a strong reason to believe this formulation is tight. It is clear that $\Theta_{\textsf{REV}}(G) \le \Theta(G)$. What more can we say about this quantity?

\subsection{Properties of the Reversible Capacity}
\begin{theorem}\begin{equation}
    \Theta_{\textsf{REV}}(G) \ge \alpha(G).
\end{equation}\end{theorem}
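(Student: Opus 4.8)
The plan is to realize the lower bound with the most naive possible code: the trivial block code over a maximum independent set, and to check that the automaton recognizing it can be taken reversible. Let $S \subseteq V(G)$ be an independent set with $|S| = \alpha(G)$, and let $L = S^{*}$ be the language of all finite strings over the sub-alphabet $S$. First I would verify that $L$ is a graph language for $G$: if $x,y \in L$ have equal length and $x \neq y$, choose a coordinate $i$ with $x_i \neq y_i$; both $x_i$ and $y_i$ lie in $S$, so they are distinct and non-adjacent in $G$, whence $(x_i,y_i)\notin E(G)$ and $x,y$ are not confusable. Thus $L$ is distinguishable.

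Next I would exhibit a reversible partial DFA for $L$. Take a single state $q$, which is simultaneously the initial and the (only) accepting state; for each $s \in S$ set the transition matrix $T_s = [1]$ (a self-loop at $q$), and for each $t \in V(G)\setminus S$ set $T_t = [0]$ (immediate rejection). Every $T_s$ is a $1\times 1$ subpermutation matrix — its row and column sums are at most $1$ — so this partial DFA is reversible, and it plainly accepts exactly $S^{*}$. It is worth flagging that partiality is doing real work here: the familiar \emph{total} DFA for $S^{*}$ sends every forbidden symbol into a rejecting absorbing sink, and that sink then absorbs probability mass from two distinct states, violating the column-sum condition; dropping the absorbing state, as our partial-DFA formalism allows, is precisely what restores reversibility.

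Finally, the growth rate of $L$ is immediate, since $|L(n)| = |S|^{n} = \alpha(G)^{n}$ and so $\limsup_{n} |L(n)|^{1/n} = \alpha(G)$ (equivalently, by Theorem \ref{thm:growth} the reduced matrix $R_{\neg abs}$ is the $1\times 1$ matrix $[\alpha(G)]$, whose top eigenvalue is $\alpha(G)$). Hence there is a reversible DFA realizing a valid graph language for $G$ with growth rate $\alpha(G)$, and taking the supremum over all such DFAs gives $\Theta_{\textsf{REV}}(G) \ge \alpha(G)$. I do not expect any genuine obstacle in this argument: each step is a direct verification, and the only subtlety is the reversibility check, where one must avoid the standard total automaton (which fails the column-sum test) and instead use the one-state partial automaton obtained by deleting the absorbing state.
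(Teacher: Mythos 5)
Your proof is correct and takes essentially the same route as the paper: fix a maximum independent set $S$, recognize $S^{*}$ with the one-state partial DFA, and observe it is reversible with growth rate $\alpha(G)$. You fill in more detail than the paper does — in particular your explicit note that partiality is what makes the column-sum condition go through, since the total automaton's absorbing sink would break reversibility — but the underlying construction is identical.
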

The reversible capacity is always at least as large as the largest independent set in $G$. To see this, fix a maximum independent set $S \subseteq G$. Take the one-state DFA, that accepts symbols in $S$ and rejects everything else. Since it is only one state, it is reversible, and it clearly has capacity $\alpha(G)$ and recognizes an independent set language.
\begin{theorem}\begin{equation}
    \Theta_{\textsf{REV}}(G \cup H) \ge \Theta_{\textsf{REV}}(G) + \Theta_{\textsf{REV}}(H).
\end{equation}\end{theorem}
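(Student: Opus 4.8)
The plan is to reduce to a statement about single reversible DFAs and then give an explicit ``parallel product'' construction, mirroring the classical proof of $\Theta(G\sqcup H)\ge\Theta(G)+\Theta(H)$. Here $G\cup H$ is the disjoint union with no edges between the parts, so a symbol of $V(G)$ is never confusable with a symbol of $V(H)$. It suffices to show: if $D_G$ and $D_H$ are reversible partial DFAs recognizing graph languages $L_G$ for $G$ and $L_H$ for $H$ with growth rates $r_G$ and $r_H$, then there is a reversible partial DFA recognizing a graph language for $G\cup H$ with growth rate at least $r_G+r_H$; taking suprema over $D_G,D_H$ then gives the theorem. First I would normalize $D_G,D_H$ using the connectedness arguments of Section~\ref{sec:simpDFA}. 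The one point that needs care is that for \emph{partial} reversible DFAs the step ``collapse the final strongly connected components'' must be carried out by \emph{deleting} those states (so that each row pointing into them becomes an all-zero ``reject'' row) rather than by merging them into a single absorbing state: deleting rows and columns of a subpermutation matrix again gives a subpermutation matrix, whereas the merge does not. After this we may assume each $D_i$ has a single strongly connected state set, so by Theorem~\ref{thm:simple_growth} the growth rate $r_G$ equals the largest eigenvalue of the irreducible reduced matrix $R_G=\sum_{u\in V(G)}T^G_u$, and similarly for $R_H$.

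Next I would build the parallel product on state set $Q_G\times Q_H$ over the alphabet $V(G)\sqcup V(H)$: a symbol $u\in V(G)$ acts by $T^G_u\otimes I$ (advance $D_G$, leave $D_H$ fixed) and a symbol $v\in V(H)$ acts by $I\otimes T^H_v$. A tensor product of a subpermutation matrix with an identity is a subpermutation matrix, so this is a partial reversible DFA, and it is strongly connected, since from any $(q_G,q_H)$ one can move the left component to an arbitrary target with $V(G)$-symbols and then the right component with $V(H)$-symbols. In the ``all eventually accepting'' convention for partial DFAs it recognizes $L=\{\,w:\ w_G\in L_G\ \text{and}\ w_H\in L_H\,\}$, where $w_G$ is the subsequence of $w$ formed by its $V(G)$-symbols: a run of the product dies exactly when one of the two simulated runs dies.

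Two things remain. First, $L$ is a graph language for $G\cup H$: given $x,y\in L$ of equal length, if they disagree on which coordinates carry $V(G)$-symbols versus $V(H)$-symbols, then at a coordinate where they disagree one of them lies in $V(G)$ and the other in $V(H)$, which is neither an equality nor an edge of $G\cup H$; and if $x$ and $y$ induce the same pattern, then $x_G$ and $y_G$ are equal-length words of $L_G$, hence $G$-distinguishable, and the witnessing coordinate is non-confusable in $G\cup H$ as well. Second, the reduced matrix of the product is the Kronecker sum $R_G\otimes I+I\otimes R_H$, whose eigenvalues are the pairwise sums $\lambda+\mu$ of eigenvalues of $R_G$ and $R_H$; since $R_G$ and $R_H$ are irreducible nonnegative matrices, their largest eigenvalues are $r_G$ and $r_H$, so the largest eigenvalue of the product's reduced matrix is $r_G+r_H$, which by Theorem~\ref{thm:simple_growth} is the growth rate of $L$. (This eigenvalue identity is the operator-theoretic shadow of the binomial count $|L(n)|=\sum_k\binom{n}{k}|L_G(k)|\,|L_H(n-k)|$ and its entropy optimization, which is how the classical inequality for $\Theta$ is proved.) Taking suprema over all reversible DFAs for $G$ and for $H$ gives $\Theta_{\textsf{REV}}(G\cup H)\ge\Theta_{\textsf{REV}}(G)+\Theta_{\textsf{REV}}(H)$.

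I expect the main obstacle to be the normalization step, not the product construction. One must verify that the Section~\ref{sec:simpDFA} simplifications can genuinely be run in a reversibility-preserving way --- deleting states instead of collapsing them, and checking that restricting a reversible DFA to a strongly connected component still recognizes a valid graph language for $G$ (via the prefix/suffix-extension argument from the proof of Theorem~\ref{thm:one_acc}, adapted to partial reversible DFAs). Once single-component reversible DFAs are in hand, the product construction, the graph-language check, and the eigenvalue computation are all routine.
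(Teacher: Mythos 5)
Your proof takes the same approach as the paper's: run the two reversible DFAs in parallel on the disjoint alphabet with transition operators $T_g\otimes I$ and $I\otimes T_h$, and read off the growth rate of the product from the Kronecker sum $R_G\otimes I + I\otimes R_H$. You spell out the normalization-to-a-single-component step, the partial-DFA subtlety (deleting rather than merging states to stay subpermutation), and the verification that the product language is a valid code for $G\cup H$ --- all details the paper's terse version leaves implicit --- but the construction and the eigenvalue identity are the same.
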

where $\cup$ denotes the disjoint union of the graphs. The analogous statement holds for the Shannon capacity, and Shannon conjectured that in fact equality always held; but this was disproved by Alon\supercite{Alon98}. In our case, take a reversible DFA for $G$ and another for $H$. Create a new DFA that runs these two in parallel, updating the $G$ half of the state whenever it encounters a vertex from $G$, and the $H$ half whenever it encounters a vertex from $H$. The transition matrices are now exactly the Kronecker products $T_g \otimes I$ and $I \otimes T_h$. Because these operators commute, we have
$$\lambda_{\max}\left( \sum_g T_g \otimes I + \sum_h I \otimes T_h\right) = \lambda_{\max}\left( \sum_g T_g \otimes I\right) + \lambda_{\max}\left(\sum_h I \otimes T_h\right) = \Theta_{\textsf{REV}}(G) + \Theta_{\textsf{REV}}(H).$$
\begin{theorem}\begin{equation}
    \Theta_{\textsf{REV}}(G \boxtimes H) \ge \Theta_{\textsf{REV}}(G) \Theta_{\textsf{REV}}(H).
\end{equation}\end{theorem}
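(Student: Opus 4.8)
The plan is to follow the template of the two preceding theorems. Given reversible DFAs $D_G$ and $D_H$ recognizing graph languages for $G$ and $H$ with growth rates $v_G$ and $v_H$ (which we will eventually push toward $\Theta_{\textsf{REV}}(G)$ and $\Theta_{\textsf{REV}}(H)$), I construct a single reversible partial DFA $D$ on the product alphabet $V(G)\times V(H)=V(G\boxtimes H)$ that runs the two in parallel, verify that $D$ recognizes a valid graph language for $G\boxtimes H$, and show its growth rate is $v_Gv_H$; taking the supremum over $D_G,D_H$ then gives the claim.

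For the construction, let $D$ have state space $Q_G\times Q_H$ and, on reading the symbol $(g,h)$, apply the operator $T^G_g\otimes T^H_h$ (equivalently, in the partial-DFA picture of the previous sections, drop the absorbing states of $D_G,D_H$ and take the Kronecker product of the two remaining single components). A Kronecker product of two subpermutation matrices is again a subpermutation matrix — the $(i,k)$-th row of $T^G_g\otimes T^H_h$ is the Kronecker product of a row of $T^G_g$ with a row of $T^H_h$, hence carries at most one nonzero entry, and likewise for columns — so $D$ is a reversible partial DFA. The run of $D$ on a string $w=(a_1,b_1)\cdots(a_n,b_n)$ projects coordinatewise onto the runs of $D_G$ on $a_1\cdots a_n$ and of $D_H$ on $b_1\cdots b_n$, so $w\in L(D)$ exactly when $a_1\cdots a_n\in L(D_G)$ and $b_1\cdots b_n\in L(D_H)$.

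For correctness, suppose $w,w'\in L(D)$ are distinct equal-length strings that are confusable in $G\boxtimes H$, i.e.\ at each position $i$ either $w_i=w'_i$ or $\{w_i,w'_i\}\in E(G\boxtimes H)$. Writing $w_i=(a_i,b_i)$, $w'_i=(a'_i,b'_i)$ and unpacking the strong-product adjacency relation, this says exactly that at every $i$ both ``$a_i=a'_i$ or $\{a_i,a'_i\}\in E(G)$'' and ``$b_i=b'_i$ or $\{b_i,b'_i\}\in E(H)$'' hold; that is, the $G$-projections of $w,w'$ are confusable in $G$ and the $H$-projections are confusable in $H$. Since $w\neq w'$ they differ in some coordinate, say the $G$-coordinate; then the $G$-projections are distinct equal-length confusable strings, both lying in $L(D_G)$, contradicting that $L(D_G)$ is a graph language for $G$. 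Hence $L(D)$ is a graph language for $G\boxtimes H$.

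For the growth rate, the reduced matrix of $D$ is
\[
\sum_{(g,h)}T^G_g\otimes T^H_h=\Big(\sum_g T^G_g\Big)\otimes\Big(\sum_h T^H_h\Big)=R_G\otimes R_H,
\]
whose dominant eigenvalue is $\lambda_{\max}(R_G)\,\lambda_{\max}(R_H)$. Normalizing $D_G,D_H$ via Theorem \ref{thm:one_abs} so that Theorem \ref{thm:simple_growth} (or its partial-DFA analogue) identifies the growth rate with this dominant eigenvalue, the growth rate of $D$ is $v_Gv_H$, and taking the supremum over valid $D_G,D_H$ finishes. I do not expect a substantive obstacle; the one point needing care is precisely this last step, since $|L(D)(n)|=|L(D_G)(n)|\cdot|L(D_H)(n)|$ only gives $\limsup$s, and a $\limsup$ of a product is in general merely $\le$ the product of the $\limsup$s — the wrong direction. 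This is resolved by routing through the eigenvalue characterization of Theorem \ref{thm:simple_growth} (equivalently, using that for a strongly connected reduced matrix the counts grow like $v^n$ up to polynomial factors along an arithmetic progression of lengths, then intersecting the two progressions). A secondary, milder point is checking that the Theorem \ref{thm:one_abs} normal form can be arranged compatibly with reversibility, which is immediate here because $\Theta_{\textsf{REV}}$ is defined via reversible \emph{partial} DFAs, so we may work inside a single strongly connected component with every state ``accepting.''
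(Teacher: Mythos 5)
Your proposal is correct and follows the same approach as the paper: run the two reversible DFAs in parallel via the Kronecker product of their transition matrices, and identify the resulting growth rate with $\lambda_{\max}(R_G\otimes R_H)=\lambda_{\max}(R_G)\lambda_{\max}(R_H)$. The paper states this quite tersely; you have simply filled in the details it leaves implicit (closure of subpermutation matrices under $\otimes$, the correctness check via projections, and the $\limsup$-of-products subtlety), all of which are handled correctly.
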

This is the equation that we would like to be an equality, and if it was, would imply that $\Theta_R = \Theta$. Again, to see that this is true, run a DFA for $G$ and $H$ in parallel. Now, each vertex is labelled as a pair $(g,h)$, and the transition process each separately as $T_g \otimes T_h$. Then
$$\lambda_{\max}\left( \sum_{g,h} T_g \otimes T_h\right) = \lambda_{\max}\left( \sum_g T_g \otimes I\right) \lambda_{\max}\left(\sum_h I \otimes T_h\right) = \Theta_{\textsf{REV}}(G) \Theta_{\textsf{REV}}(H).$$

We can also show that $\Theta_{\textsf{REV}}$ is close to $\Theta$ when the capacity of a graph is much smaller than the graph itself. For non-reversible DFAs, we have the simple language that takes an independent set $\alpha(G^n)$, looks at symbols $n$-at-a-time, and accepts strings in the independent set. This doesn't give us reversible DFAs in general. But we can make it reversible at the expense of some extra time. Taking an $n$ symbol independent set of $G^n$, we give our DFA one state for each element in this independent set. Now we want to reversibly rewind this back to the initial state. If we number the independent set $1\dots M$ with $M = \alpha(G^n)$, our DFA can insist that after the $n$-symbol independent set $m$, it needs to see the number $m$ written in a $|G|$-ary representation. This can't lead to any new confusable strings, because the $M$ independent sets were already distinguishable, and now we're just adding a suffix. Since we write a number from 1 to $M$ using digits in base $|G|$, we need $\ceil{\log_{|G|}(M)}$ many digits to write out.

So we need a total length of
$$n + \ceil{\log_{|G|}(M)} \le n + \log_{|G|}(M) + 1$$
to code for $M$ symbols, giving capacity at least
$$\Theta_{\textsf{REV}} \ge M^{1/(n + \log_{|G|}(M) + 1)}$$
For large $n$, $M = O(\Theta^n) \ge c\Theta^n$ for some fixed $c$, so this expression becomes
$$\Theta_{\textsf{REV}} \ge c^{1/(n + \log_{|G|}(\Theta^n) + 1)}\Theta^{n/(n + \log_{|G|}(\Theta^n) + 1)} \overset{n}{\longrightarrow} \Theta^{n/(n + n\log_{|G|}(\Theta) + 1)} =$$
$$\Theta^{1/(1 + \log(\Theta)/\log(|G|) + 1/n)} \overset{n}{\longrightarrow} \Theta^{\log(|G|)/(\log(|G|)+\log(\Theta))}.$$
So,
\begin{theorem}The reversible capacity is bounded by
\begin{equation}
    \Theta_{\textsf{REV}}(G) \ge \Theta(G)^{\log(|G|)/(\log(|G|)+\log(\Theta(G)))}
\end{equation}
or slightly differently,
\begin{equation}\label{eqn:thr_bound_ge}
    \log \Theta_{\textsf{REV}} \ge \frac{\log |G|\log \Theta }{\log |G| + \log \Theta}.
\end{equation}\end{theorem}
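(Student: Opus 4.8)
The plan is to make the informal construction that immediately precedes the theorem statement into an explicit reversible partial DFA, verify its three defining features (subpermutation transition matrices, validity as a graph language, the right growth rate), and then send the length overhead to its limit in $n$. \textbf{Step 1 (the code).} Fix $n$ and let $S=\{c_1,\dots,c_M\}$ be a maximum independent set of $G^{\boxtimes n}$, so $M=\alpha(G^{\boxtimes n})$. Put $r=\ceil{\log_{|G|}M}$ and let $e_m\in V(G)^r$ be the zero-padded base-$|G|$ representation of $m$; the map $m\mapsto e_m$ is injective. I will recognize the block language $\mathcal L^*$ with $\mathcal L=\{\,c_m e_m : m\in[M]\,\}$, whose blocks all have length $L=n+r$.

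\textbf{Step 2 (the automaton, and the main obstacle).} The automaton has a root $\rho$ (initial state and unique accepting state); the \emph{codeword phase} is a forward trie on $\{c_1,\dots,c_M\}$ rooted at $\rho$, whose states are prefixes of the $c_m$ and whose leaves are the $\ell_m$; the \emph{index phase} is organized as a \emph{reverse} trie whose states are labelled by the suffix of $e_m$ that still has to be read, with $\ell_m$ identified with the state ``remaining $=e_m$'' and the state ``remaining $=\varepsilon$'' identified with $\rho$ (so a completed block loops back). Every transition not described is rejecting (an empty row), which is exactly what a partial DFA permits. The content of this step is that each transition matrix has row and column sums at most $1$: row sums at most $1$ is determinism, which holds because both the trie structure and the ``remaining suffix'' labelling prescribe at most one successor per symbol. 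Column sums at most $1$ is the delicate point — naively routing each $\ell_m$-chain straight back to $\rho$ on its final digit would give $\rho$ many predecessors under the same symbol — and the reverse-trie labelling is exactly what rescues reversibility: the unique predecessor of a state ``remaining $=\sigma$'' under symbol $x$ is ``remaining $=x\sigma$'' (in particular the unique predecessor of $\rho$ under $x$ is ``remaining $=x$''), while in the codeword trie the unique predecessor of a node is its parent; a short case check that these two families of states, together with their one shared point $\rho$ and the shared leaves $\ell_m$, introduce no further incoming edges then finishes it. I expect this verification — that the loop-back can be made column-injective — to be the only genuine obstacle; the informal argument in the paper glosses over it.

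\textbf{Step 3 (correctness and growth rate).} For validity as a graph language, two accepted strings of equal length parse into blocks of length $L$ with aligned boundaries, so if they differ, look at the first differing block: there the codeword indices $a\ne b$ (since $e_m$ is a function of $m$, equal indices force an equal block), hence $c_a\ne c_b$, and because $S$ is independent in $G^{\boxtimes n}$ there is a coordinate $i\le n$ with $(c_a)_i\ne(c_b)_i$ and $((c_a)_i,(c_b)_i)\notin E(G)$ — a non-confusable position, so the strings are distinguishable. For the growth rate, $\mathcal L$ is a fixed-length (hence prefix-free) code, so $\mathcal L^*$ has exactly $M^k$ words of length $kL$ and none of any other length, whence its growth rate is $M^{1/L}=M^{1/(n+\ceil{\log_{|G|}M})}$. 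Since $\Theta_{\textsf{REV}}$ is a supremum over reversible DFAs, $\Theta_{\textsf{REV}}(G)\ge M^{1/(n+\ceil{\log_{|G|}M})}$ for every $n$.

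\textbf{Step 4 (the limit).} Using $\ceil{\log_{|G|}M}\le\log_{|G|}M+1$ and taking logarithms,
\[
\log\Theta_{\textsf{REV}}\ \ge\ \frac{\log M}{\,n+\log M/\log|G|+1\,}.
\]
Now let $n\to\infty$: by supermultiplicativity of $\alpha$ and Fekete's lemma, $\tfrac1n\log\alpha(G^{\boxtimes n})\to\log\Theta(G)$, i.e. $\log M=n\log\Theta+o(n)$; substituting and simplifying sends the right-hand side to $\dfrac{\log|G|\,\log\Theta}{\log|G|+\log\Theta}$, which is \eqref{eqn:thr_bound_ge}, and exponentiating gives the first displayed form.
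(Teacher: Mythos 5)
Your proposal is correct and follows the same route as the paper: encode each codeword of a maximum independent set of $G^{\boxtimes n}$ followed by its base-$|G|$ index, recognize this block language with a reversible partial DFA, compute the growth rate $M^{1/(n+\lceil\log_{|G|}M\rceil)}$, and take $n\to\infty$ via Fekete's lemma. The one place you go beyond the paper is in actually exhibiting the automaton (forward codeword trie glued to a reverse trie on index suffixes) and verifying column-injectivity of the transitions — a genuine gap in the paper's informal sketch, and your reverse-trie device resolves it correctly.
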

In particular, for a family of graphs $( G_i )$ with $\log \Theta(G_i) = o(\sqrt{\log V_i})$, we know that $\lim \frac{\Theta_R}{\Theta} = 1$. This is enough to prove a very useful bound,

\begin{theorem}\label{thm:comp_imply_comp}
The Shannon capacity is within bounded error of the reversible capacity by
\begin{equation}\label{eqn:thr_tht_bound}
    \Theta_{\textsf{REV}}(G) \le \Theta(G) \le \Theta_{\textsf{REV}}(G)^{\log |G| / (\log |G| - \log \Theta_{\textsf{REV}}(G))}.
\end{equation}
If there is an algorithm to give convergent upper bounds on the reversible capacity, then the Shannon capacity is also computable.
\end{theorem}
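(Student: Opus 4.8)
My plan is to dispose of the two displayed inequalities quickly and then concentrate on the computability statement, which is where the real content lies. The inequality in~(\ref{eqn:thr_tht_bound}) is just an algebraic rearrangement of~(\ref{eqn:thr_bound_ge}). Write $g=\log|G|$, $t=\log\Theta(G)$, $r=\log\Theta_{\textsf{REV}}(G)$. If $G$ is edgeless then $\Theta(G)=\Theta_{\textsf{REV}}(G)=|G|$ and there is nothing to prove, so assume $G$ has an edge; then $0\le r\le t<g$. Now~(\ref{eqn:thr_bound_ge}) reads $r(g+t)\ge gt$, i.e.\ $rg\ge t(g-r)$, and dividing by $g-r>0$ gives $t\le rg/(g-r)$, which is the upper bound of~(\ref{eqn:thr_tht_bound}); the lower bound $\Theta_{\textsf{REV}}(G)\le\Theta(G)$ holds because a reversible partial DFA recognising a graph language for $G$ is in particular a regular graph language, and $\Theta=\Theta_{\textsf{REG}}$ by Theorem~\ref{thm:regCapEqCap}.

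The engine of the computability claim is a \emph{padding} trick that inflates $|G|$ without bound while leaving $\Theta(G)$ fixed, forcing the exponent of~(\ref{eqn:thr_tht_bound}) down to $1$. Fix $v_0\in V(G)$ (we may assume $V(G)\neq\emptyset$); for $k\ge 0$, let $G^{+k}$ be $G$ with $k$ extra ``universal'' vertices, each made adjacent to every other vertex of $G^{+k}$. Then $|G^{+k}|=|G|+k$, and I claim $\Theta(G^{+k})=\Theta(G)$. One direction is monotonicity of $\Theta$ under induced subgraphs. For the other, given an independent set $S$ in $(G^{+k})^{\boxtimes n}$, let $\phi$ replace every coordinate of every word of $S$ that equals a universal vertex by $v_0$. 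If $x\neq y$ in $S$, pick $j$ with $(x_j,y_j)\notin E(G^{+k})$ and $x_j\neq y_j$; since a universal vertex is confusable (equal or adjacent) with every vertex, neither $x_j$ nor $y_j$ is universal, so $\phi(x)_j=x_j\neq y_j=\phi(y)_j$ and $(x_j,y_j)\notin E(G)$. Hence $\phi$ is injective and $\phi(S)$ is an independent set in $G^{\boxtimes n}$, so $\alpha((G^{+k})^{\boxtimes n})\le\alpha(G^{\boxtimes n})$ and therefore $\Theta(G^{+k})\le\Theta(G)$. Establishing this ``padding lemma'' — that $\phi$ is injective and preserves distinguishability — is the step I expect to demand the most care; everything else is bookkeeping.

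Next I would record that $\Theta_{\textsf{REV}}(H)$ is computable for every graph $H$: enumerate all reversible partial DFAs, keep those recognising a valid graph language for $H$ (decidable, via the algorithm behind Theorem~\ref{thm:polycheck}), and compute each one's growth rate by Theorem~\ref{thm:growth} as the dominant eigenvalue of a nonnegative integer matrix, which is a computable real. The running maximum converges upward to the supremum $\Theta_{\textsf{REV}}(H)$ of these growth rates, and the hypothesised algorithm supplies a sequence of upper bounds converging down to $\Theta_{\textsf{REV}}(H)$, so $\Theta_{\textsf{REV}}(H)$ can be approximated to any prescribed precision.

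Finally, assemble the pieces: applying~(\ref{eqn:thr_tht_bound}) to $G^{+k}$ gives
\[
\Theta_{\textsf{REV}}(G^{+k}) \;\le\; \Theta(G) \;\le\; \Theta_{\textsf{REV}}(G^{+k})^{E_k},\qquad E_k=\frac{\log(|G|+k)}{\log(|G|+k)-\log\Theta_{\textsf{REV}}(G^{+k})}.
\]
Since $\Theta_{\textsf{REV}}(G^{+k})\le\Theta(G^{+k})=\Theta(G)\le|G|$, the denominator of $E_k$ is at least $\log(1+k/|G|)$, so $1\le E_k\le\log(|G|+k)/\log(1+k/|G|)\to 1$ as $k\to\infty$; consequently $\Theta_{\textsf{REV}}(G^{+k})^{E_k}\le|G|^{E_k-1}\,\Theta_{\textsf{REV}}(G^{+k})$ with $|G|^{E_k-1}\to 1$. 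Thus the computable quantity $\Theta_{\textsf{REV}}(G^{+k})$ and the computable quantity $\Theta_{\textsf{REV}}(G^{+k})^{E_k}$ sandwich $\Theta(G)$ in intervals whose width tends to $0$; to reach a target accuracy $\varepsilon$ one picks $k$ large enough using the explicit bound $E_k\le\log(|G|+k)/\log(1+k/|G|)$, approximates $\Theta_{\textsf{REV}}(G^{+k})$ finely enough by the previous paragraph, and reports the resulting bracket. This yields an algorithm computing $\Theta(G)$. The main obstacle throughout is the padding lemma; the remainder is rearrangement and routine bookkeeping.
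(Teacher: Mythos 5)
Your proposal is correct and follows essentially the same route as the paper's own proof. Your padded graph $G^{+k}$ is exactly the paper's $G\nabla K_k$ (the graph join with the complete graph $K_k$), and the key idea --- pad $G$ with universal vertices so that $\Theta$ is unchanged while $|G|$ grows, forcing the exponent in Eq.~(\ref{eqn:thr_tht_bound}) to $1$ --- is the same. You do supply more detail than the paper in two subsidiary places: an explicit injection $\phi$ showing $\alpha\bigl((G^{+k})^{\boxtimes n}\bigr)\le\alpha(G^{\boxtimes n})$ (which the paper merely asserts), and the enumeration-plus-upper-bounds argument that $\Theta_{\textsf{REV}}$ itself is computable under the hypothesis. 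Both are correct and helpful but do not change the strategy. One small unstated step in both your write-up and the paper is that $\Theta(G)<|G|$ whenever $G$ has an edge (needed so that $\log|G|-\log\Theta_{\textsf{REV}}(G)>0$); this follows, for example, from the Lov\'{a}sz $\vartheta$ bound and is implicitly needed for the theorem statement itself to be well-posed.
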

\begin{proof}
The first equation is given by solving Eq. \ref{eqn:thr_bound_ge} for $\Theta(G)$ to get one side, and the fact that $\Theta_{\textsf{REV}}(G) \le \Theta(G)$ on the other.

$\Theta(G)$ will be computable if we can provide convergent upper bounds on it. Fixing $G$, consider the sequence of graphs $G_i := G \nabla K_i$, where $G \nabla H$ denote the graph join operation that connects all vertices in $G$ to all vertices in $H$, and $K_i$ is the $i$-vertex complete graph. Any independent set of $G_i^{\boxtimes k}$ can be changed to one of equal size that does not use any coordinates from the $K_i$ within $G_i$. So $\alpha(G_i^{\boxtimes k}) = \alpha(G^{\boxtimes k})$, and we see that $\Theta(G_i)=\Theta(G)$. If we wish to bound $\Theta(G)$ to a given accuracy, we just need to take $i$ sufficiently large and get sufficiently good upper bounds on $\Theta_{\textsf{REV}}(G_i)$, and then Eqn. \ref{eqn:thr_tht_bound} puts turns that into good upper bounds on $\Theta(G)$.
\end{proof}
To make the above relationship more concrete, if we wish to bound $\Theta(G)$ to within a multiplicative accuracy $1+\epsilon$, then it suffices when
$$\log |G_i| \ge \frac{\log(\Theta_{\textsf{REV}}(G_i)) \left(\log(1+\epsilon)+\log(\Theta_{\textsf{REV}}(G_i))\right)}{\log(1+\epsilon)}$$
and then applying the bound $\Theta_{\textsf{REV}}(G_i) \le \Theta(G) \le |G|$,
$$|G_i| \ge |G|^{1+\frac{\log(|G|)}{\log(1+\epsilon)}} \approx |G|^{1+\frac{1}{\epsilon}\log(|G|)}$$
is sufficiently large of a graph on which to upper bound $\Theta_{\textsf{REV}}$.

We computations on some small graphs in Appendix \ref{app:SmallRev}.

\section{Quantizing}
Reversible capacity is nice because we can bound the error between it and the true capacity. But reversible transition matrices still involve the fact that their entries are 0 or 1, a requirement that we cannot enforce in the NPO framework. So, we need to upgrade our model again to a new set of languages: Nondeterministic Quantum Languages, or \textsf{NQL}.

\textsf{NQL}, or the Kondacs-Watrous Quantum Finite Automata\supercite{KW97_NQL}, are defined by a set of unitaries $U_s$ on a finite-dimensional Hilbert space, where $s$ ranges over the alphabet of the input. There are designated projectors $A$ and $R$ (with disjoint support) that project onto an accepting and rejecting subspace, respectively. The remaining non-terminating states $N = I-A-R$. After each symbol is consumed by the automaton, a projective measurement is performed to check if the automaton is in $A$, $R$, or $N$ respectively. If it's one of the former two, the string is accepted (resp. rejected) early and processing stops. Otherwise processing continues. And if the whole string is accepted with any {\em nonzero} probability, we say that the string is accepted by the automaton.

The class \textsf{NQL} in fact contains all regular languages\supercite{KW97_NQL,YS10_NQL} despite the unitarity requirement, which in turn means that the corresponding capacity $\Theta_{\textsf{NQL}}$ will be at least as great as the regular capacity $\Theta_{\textsf{REG}}$. And since $\Theta_{\textsf{REG}}\le\Theta_{\textsf{NQL}}\le\Theta_{\textsf{ALL}}=\Theta_{\textsf{REG}}$, this \textsf{NQL} capacity is exactly the zero-error capacity.

But then we need {\em one more} change. In the NPO framework, it is difficult to count strings ``accepted with some nonzero probability"; such non-equality facts are hard to deal with. So we define the {\em unitary capacity} $\Theta_U$ as the supremum over Kondacs-Watrous Quantum Finite Automata that correctly accept a graph language, but when computing the growth rate, we only count strings proportional to their acceptance probability.

In other words, take a quantum finite automaton $\{U_s\}$. It defines a language in \textsf{NQL}. We require that this language is a valid code -- no two accepted strings are confusable. We define the {\em capacity} of that automaton by
$$\textsc{Capacity}(\{U_s\}) = \lim_{n\to \infty}\left(\sum_{S\in \Sigma^n}\Big|A\prod_{s\in S} (U_sN) \ket{init}\Big|^2\right)^{1/n}$$
This is distinct from the growth rate of the corresponding language,
$$\textsc{GrowthRate}(\{U_s\}) = \lim_{n\to \infty}\left(\sum_{S\in \Sigma^n}sign\left(\Big|A\prod_{s\in S}(U_sN) \ket{init}\Big|^2\right)\right)^{1/n}$$
which counts each nondeterministically accepted string with equal weight of one. Which leads to our definition,
\begin{definition}
    For a graph $G$, the {\em unitary capacity} $\Theta_U(G)$ is the supremum over all Kondacs-Watrous Quantum Finite Automata $\{U_s\}$, that correctly define a distinguishable language on $G$, of the capacity $\textsc{Capacity}(\{U_s\})$.
\end{definition}
This will be the quantity which our NPO ultimately corresponds to.

\subsection{Unitary Capacity}
The unitary capacity is well-behaved enough for our needs.
\begin{theorem}\label{thm:thtU_lt_tht}
The unitary capacity $\Theta_U(G)$ is at most the zero-error capacity $\Theta(G)$.
\end{theorem}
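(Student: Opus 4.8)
The plan is to exploit the fact that although the unitary capacity weights each accepted string by its acceptance probability, this weighted count is dominated by the plain count of accepted strings, and that count is an independent set in a strong power of $G$. So I would reduce $\Theta_U(G)\le\Theta(G)$ to the classical estimate $\alpha(G^{\boxtimes n})^{1/n}\le\Theta(G)$.

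First I would fix an admissible Kondacs--Watrous automaton $\{U_s\}$ with accepting projector $A$, non-halting projector $N$, and initial state $\ket{init}$, defining a distinguishable graph language $L$ for $G$. For $S\in\Sigma^n$ write $p(S)=\bigl|A\prod_{s\in S}(U_sN)\ket{init}\bigr|^2$ for its acceptance probability. The key elementary observation is that $p(S)\le 1$: each $U_s$ is unitary, hence norm-preserving; each of $N$ and $A$ is an orthogonal projector, hence norm-nonincreasing; and $\ket{init}$ is a unit vector, so $\bigl\|A\prod_{s\in S}(U_sN)\ket{init}\bigr\|\le 1$. Since moreover $p(S)>0$ holds exactly when $S\in L(n)$ (that is the definition of membership in an \textsf{NQL} language), we get, for every $n$,
$$\sum_{S\in\Sigma^n} p(S) \;\le\; \bigl|\{S\in\Sigma^n : p(S)>0\}\bigr| \;=\; |L(n)|.$$

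Next I would invoke the correctness hypothesis. Because $L$ is a distinguishable graph language, any two distinct length-$n$ words of $L$ differ in some coordinate where the two symbols are non-confusable, which is precisely the statement that $L(n)$ is an independent set in $G^{\boxtimes n}$; hence $|L(n)|\le\alpha(G^{\boxtimes n})$. Chaining this with the previous display and taking $n$-th roots,
$$\Big(\sum_{S\in\Sigma^n} p(S)\Big)^{1/n} \;\le\; \alpha\bigl(G^{\boxtimes n}\bigr)^{1/n} \;\le\; \Theta(G),$$
where the last step uses supermultiplicativity of $\alpha$ under $\boxtimes$ together with Fekete's lemma, which give $\alpha(G^{\boxtimes n})^{1/n}\le\sup_m\alpha(G^{\boxtimes m})^{1/m}=\Theta(G)$ for every single $n$, not merely in the limit. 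Letting $n\to\infty$ yields $\textsc{Capacity}(\{U_s\})\le\Theta(G)$, and taking the supremum over all admissible automata gives $\Theta_U(G)\le\Theta(G)$.

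I do not expect a genuine obstacle here; the only points needing care are the norm-nonexpansiveness argument that forces $p(S)\le 1$ — this is what lets probability-weighting be bounded by counting — and the observation that $\alpha(G^{\boxtimes n})^{1/n}$ already lies below $\Theta(G)$ at finite $n$, so no interchange of limits is required; if the limit in the definition of $\textsc{Capacity}$ is read as a $\limsup$ the argument is unchanged. As a side remark, the same chain (now without even needing the $p(S)\le 1$ step, since one counts directly $|L(n)|\le\alpha(G^{\boxtimes n})$) shows $\textsc{GrowthRate}(\{U_s\})\le\Theta(G)$ as well, consistent with $\Theta_{\textsf{NQL}}=\Theta$.
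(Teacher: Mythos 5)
Your proof is correct and follows essentially the same route as the paper's, which argues tersely that the probability-weighted count can only undercount the plain number of accepted strings (since each $p(S)\le 1$), and that the growth rate of any valid graph language is bounded by $\Theta(G)$. You have simply spelled out the two steps the paper leaves implicit — norm-nonexpansiveness of the $U_s N$ factors and the supermultiplicativity/Fekete argument giving $\alpha(G^{\boxtimes n})^{1/n}\le\Theta(G)$ at every finite $n$ — so there is no substantive difference.
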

\begin{proof}
The unitary capacity is defined as a supremum over automata that define true graph languages, languages in \textsf{NQL}. On the other hand, it can only undercount the growth rate of the language, so we are taking the supremum over values that are at most capacities of valid languages.
\end{proof}
\begin{theorem}\label{thm:thtU_gt_thtRev}
The unitary capacity $\Theta_U(G)$ is at least the reversible capacity $\Theta_{\textsf{REV}}(G)$.
\end{theorem}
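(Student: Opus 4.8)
The plan is to take a reversible partial DFA $D$ whose growth rate $v$ is arbitrarily close to $\Theta_{\textsf{REV}}(G)$, convert it into a Kondacs--Watrous quantum finite automaton recognizing (essentially) the same language, argue that this automaton's $\textsc{Capacity}$ equals $v$, and then let $v\to\Theta_{\textsf{REV}}(G)$. By Theorems~\ref{thm:one_abs}, \ref{thm:one_acc} and \ref{thm:simple_growth} we may assume $D$ has a single strongly connected component of ``live'' states $[m]$ (having dropped the absorbing rejecting state, so that $D$ is a partial DFA with subpermutation transition matrices $T_s$), a single accepting state, and growth rate $v=\lambda_{\max}(\sum_s T_s)$; this $D$ recognizes a distinguishable language $L_D$ for $G$.

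First I would dilate the transition matrices to unitaries. Each $T_s$ is a subpermutation matrix, hence an injective partial isometry on $\mathbb{C}^m$ (equivalently $T_s^\dagger T_s$ and $T_s T_s^\dagger$ are projectors, as was already observed for $\Theta_{\textsf{REV}}$), so it extends to a unitary $U_s$ on $\mathbb{C}^m\oplus\mathbb{C}^m$: keep $U_s=T_s$ on the live space, route each transition that $D$ leaves undefined (a zero row of $T_s$) into a distinct basis vector of the second summand, which we designate the rejecting subspace $R$, and complete to a unitary arbitrarily. The completion may map $R$ back into the live space, but this is harmless, because the Kondacs--Watrous measurement halts the automaton the instant it enters $R$, so $U_s$ is never applied there. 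For the accept/continue split I would take $N$ to be the live space and let the trailing projection $A$ in the definition of $\textsc{Capacity}$ play the role of the accept-check applied once the input is exhausted (the standard right end-marker step of this model), mapping the unique accepting live state into a dedicated accepting subspace.

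It then remains to check validity and the capacity bound. The automaton accepts, with nonzero amplitude, exactly the strings accepted by $D$, i.e.\ $L_D$, which is distinguishable for $G$, so it is a legitimate competitor in the definition of $\Theta_U(G)$. For the capacity: because every $U_s$ is, on the live space, an extension of a $0$-$1$ subpermutation, starting from the basis vector $\ket{init}$ the automaton's state after reading any prefix is always a single basis vector of modulus $1$ until it possibly falls into $R$. Hence every length-$n$ string accepted by $D$ contributes exactly $1$ to $\sum_{S\in\Sigma^n}\big|A\prod_{s\in S}(U_sN)\ket{init}\big|^2$ and every other string contributes $0$, so this sum equals $|L_D(n)|$, which by Theorem~\ref{thm:simple_growth} grows like $v^n$. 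Therefore $\textsc{Capacity}(\{U_s\})=v$, and taking the supremum over all such $D$ gives $\Theta_U(G)\ge\Theta_{\textsf{REV}}(G)$.

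The step I expect to be the main obstacle is the interaction with the Kondacs--Watrous semantics, which re-measures accept/reject/continue after every symbol: the intermediate ``continue'' projections $N$ discard any amplitude that has drifted into the accepting or rejecting subspace, so one must be sure the single accept-projection at the end still sees a $v^n$-sized family of strings, rather than, say, a first-passage count whose exponential rate is strictly smaller (as happens for a pure cycle). With the end-marker convention this is automatic, since the accept measurement then occurs exactly once; in a model without an end-marker one would instead pad $D$ by a bounded amount (not changing $v$) and invoke the strong connectedness of the live component (Theorem~\ref{thm:one_abs}) together with the Perron--Frobenius fact that every entry of $(\sum_s T_s)^n$ has exponential growth rate $\lambda_{\max}(\sum_s T_s)=v$. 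Everything else --- the dilation, and the observation that a subpermutation automaton accepts each string with amplitude $0$ or $1$ so that $\textsc{Capacity}$ and $\textsc{GrowthRate}$ coincide for it --- is routine.
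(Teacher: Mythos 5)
Your proposal takes the same basic route as the paper: dilate the subpermutation transition matrices of a reversible partial DFA to unitaries, with undefined transitions routed into the rejecting subspace, observe that on a basis-vector initial state the automaton's amplitude is always $0$ or $1$, and conclude that $\textsc{Capacity}=\textsc{GrowthRate}$ equals the DFA's growth rate. The paper's own proof is considerably terser (four sentences) and simply asserts ``all accepted strings are accepted with probability $1$,'' whereas you go further and flag a real subtlety that the paper leaves implicit: the Kondacs--Watrous model re-measures after every symbol, and the formula for $\textsc{Capacity}$ interleaves $N$-projections between the $U_s$'s. If the unique accepting state were placed in $A$ (disjoint from $N$), a string whose trajectory revisits that state midway through would have its amplitude zeroed before the final $A$-projection; counting only walks that hit the accept state exactly once at the end is a first-passage count whose exponential rate can be \emph{strictly} below $\lambda_{\max}(R_{\neg abs})$ by Perron--Frobenius. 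Your end-marker convention (or the alternative of bounded padding and invoking strong connectedness) correctly handles this, so your argument is actually tighter than the one printed. In short: same mechanism, but you supply the missing detail about where $A$, $R$, $N$ sit relative to the live states, which is exactly the step one needs to be careful about for the claimed identity $\textsc{Capacity}=\textsc{GrowthRate}$ to be legitimate.
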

\begin{proof}
Every reversible partial DFA gives a corresponding quantum automaton, one which never creates superpositions. Undefined transitions (the {\em partial} aspect of the DFA) becomes determinsitic transitions into the $R$ subspace. The reversibility guarantees that all these transitions are unitary. All accepted strings are accepted with probability 1, so for this resulting set of unitaries,
$$\textsc{Capacity}(\{U_s\}) = \textsc{GrowthRate}(\{U_s\})$$
and is equal to the growth rate of the reversible DFA. Since we can achieve the capacity of any reversible automaton, the unitary capacity is at least as large.
\end{proof}

Theorems \ref{thm:thtU_lt_tht} and \ref{thm:thtU_gt_thtRev} mean that the unitary capacity is safely sandwiched between the reversible and Shannon capacity. Combining this with Theorem \ref{thm:comp_imply_comp}, we see that getting bounds on the unitary capacity will suffice.

\subsection{Quantizing Transitions}\label{sec:quantize1}
To write this as an NPO, our quantum DFA is defined by its isometries $U_{i}$, so we take the left and right variables $U_{i,0}$ and $U_{0,j}$. We require that they commute, $U_{i,0}U_{0,j} = U_{0,j}U_{i,0}$, and build their products $U_{i,j} := U_{i,0}U_{0,j}$. The requirement that they are isometries reads $0 \preceq U_{i,0} U_{0,i}^\dagger \preceq I$.

Now evolution of the DFA goes by unitary conjugation, instead of matrix multiplication. That is, instead of $T_{0,i} D$, we get $U_{0,i} D U_{0,i}^\dagger$. The diagonal and final states are closed under evolution by the same pair $(g,g)$, and final states are also closed under confusable transitions:
\begin{align}\label{eqn:closure}
    \forall_{g},\quad D U_{g,g} D U_{g,g}^\dagger D = U_{g,g} D U_{g,g}^\dagger\\
    \forall_{g},\quad F U_{g,g} F U_{g,g}^\dagger F = U_{g,g} F U_{g,g}^\dagger\\
    \forall_{(g,h)\in E(G)},\quad F U_{g,h} F U_{g,h}^\dagger F = U_{g,h} F U_{g,h}^\dagger
\end{align}
where $F$ and $D$ are Hermitian variables $F = F^\dagger$, $D = D^\dagger$. The fact that all diagonal states are also final states still reads
\begin{equation}\label{eqn:PsuccD}
    F \succeq D.
\end{equation}
As before, the swap operator $S$ obeys
\begin{align}\label{eqn:Sprops}
    S^2 = 1,\quad S = S^\dagger\\
    SU_{i,0}S = U_{0,i}.
\end{align}
The diagonal states are unable to reach a final state by a confusable transition:
\begin{equation}\label{eqn:Pconfuse}
    \forall{(g,h)\in E(G)},\quad F U_{g,h} D U_{g,h}^\dagger F = 0
\end{equation}
% And further evolution by identical symbols can't evolve a non-diagonal state into a diagonal state -- reversibility:
% \begin{equation}\label{eqn:PmDzero}
%     \forall{g},\quad D U_{g,g} (P-D) U_{g,g}^\dagger D = 0
% \end{equation}
Under this description, it's entirely possible that some unitaries lead to a state with a norm that is neither 0 (rejection) or 1 (accepting) but somewhere in between, with probability $p$. We can think of this as the QDFA sometimes erroneously rejecting with probability $1-p$. The important thing is that we never accidentally accept two confusable strings: this is guaranteed by the above equation, which is that the projection is {\em exactly} zero.

\subsection{Quantizing the Objective}\label{sec:quantize2}
The total evolution of the automaton is given by a sum of unitary conjugations, a channel $C$ with
$$C(\rho) \sum_g U_g^\dagger\rho U_g$$
Then $C$ is a linear map on the space of density matrices, and the capacity want to measure is the dominant eigenvalue of $C$. Since this can be equivalently written as $U\otimes U$ under changing the direction of the index, the objective is simply
$$\lambda_{\max}\left(\sum_g U_{g,g}\right)$$
Finally, we want to ensure that the objective actually only cares about the action of the $U$'s on the subspace we're checking -- that supported by $D$ -- so the objective in our NPO is
\begin{equation}\label{eqn:qobj}
\lambda_{\max}\left(D \sum_g U_{g,g} D\right)
\end{equation}

\subsection{Soundness, Completeness in Tensor Value}
In the above NPO problem, suppose that we also allow a {\em tensor product constraint}, a-la quantum game's tensor values: instead of just requiring that the $U_{g,0}$'s commute with the $U_{0,h}$'s, we require that the Hilbert space has a tensor product structure $A\otimes B$ so that the $U_{g,0}$'s only operate on $A$ and vice versa. This could be viewed as an additional constraint in the NPO framework -- but, admittedly, one that no longer gives a convergent relaxation hierarchy anymore. We can show, however, that our {\em tensor product NPO} is a complete and sound characterization of unitary capacity.

\subsubsection{Completeness}
To see completeness, realize that any reversible DFA $\mathcal{M}$ obeying the properties given in Section \ref{sec:simpDFA} can be embedded as unitaries in the manner prescribed. If the DFA has $k$-states, then the Hilbert space is $\mathbb{C}^k \otimes \mathbb{C}^k$. The unitaries $U_{g,0}$ are subpermutation matrices operating on the left factor with the state transition table of $\mathcal{M}$ induced by the symbol $g$; respectively $U_{0,g}$ operates on the right side. $U_{g,h}$ are the products.

The matrix $S$ is the \textsf{SWAP} operator that exchanges the left and right subspaces. $D$ is the projector onto the diagonal subspace, the space spanned by operators $\ket{i}\otimes \ket{i}$ for $i \in [k]$. And $F$ is the projector onto the subspace spanned by states $\ket{i}\otimes \ket{j}$, such that a pair of machines $\mathcal{M}\times \mathcal{M}$ can reach the state $(i,j)$ through a pair of unequal confusable strings.

The equations \ref{eqn:closure}-\ref{eqn:Pconfuse} then hold, under our assumption that $\mathcal{M}$ recognizes a valid code. The objective value \ref{eqn:qobj} is then the growth rate of the DFA, which means that our tensor product NPO achieves at least the capacity of $\mathcal{M}$. This proves completeness.

\subsubsection{Soundness}
To prove completeness, we need to show that given a set of operators fulfilling all equations, there is an actual code achieving a capacity at least as high as the objective of the NPO problem. Given a set of satisfying operators: first we define a set of accepted code words, then we show that this forms a valid code (no confusable codewords), and that it has the correct capacity.

Under the combination of all $U$'s, there is some minimal invariant subspace. This is the unitary analog of the strongly connected component for the DFAs, and we can restrict our Hilbert space to just this invariant subspace. In the event that there are multiple minimal invariant subspaces, take any that attains the maximum eigenvalue of the objective -- one of them must. The restriction of all operators to this subspace still forms a valid tensor product NPO assignment with the same objective, so we assume WLOG that this is our Hilbert space.

The operator $D$ is intended to be a projector onto the diagonal states $\ket{i}\ket{i}$. It must be a projector; this implied by our constraint $D^2 = D$. It also cannot be equal to the zero operator, because this would imply that the objective is equal to zero. So, it has a nonempty eigenspace of eigenvalue 1. 

Furthermore, it commutes with the swap operator $S$ (with $S^2 = I$). They must be simultaneously diagonalizable, so $D$ has a eigenspace of eigenvalue 1 that is also preserved by $S$ with an eigenvalue of $\pm 1$. We can assume WLOG the the eigenvalue under $S$ is $+1$, otherwise we can rewrite $S \to -S$ and get a new valid assignment to the NPO problem.

Pick an arbitrary unit vector in this space, $\ket{\psi}$. Then our code words $L$ are defined by
$$s \in L \quad \iff \quad \bra{\psi} \left(\prod_{i} U_{s_i,s_i}\right)\ket{\psi} > 0$$
That is, the string $s$ is accepted precisely when feeding $s$ to both copies of the DFA has some probability of returning to the initial state $\ket{\psi}$.

We need to show that this is a valid code. Suppose by contradiction that a pair of confusable strings $w$ and $x$ are both accepted. Write them in the form $w = pgy$, $x = phz$, where the shared prefix $p$ is as long as possible, and $g$ and $h$ are distinct and confusable symbols. The initial state $\ket{\psi}$ is in the support of $D$ and so also that of $F$. The conditions Eq.~\ref{eqn:closure},\ref{eqn:PsuccD} means that the later state
$$\ket{\psi_{yz}} = \left(U_{g,h}\prod_{s\in p} U_{s,s}\right)\ket{\psi}$$
will also be in the support of $F$; condition Eq.~\ref{eqn:Pconfuse} means it is completely orthogonal to $D$. Finally, conditions Eq.~\ref{eqn:Pconfuse}-\ref{eqn:Pconfuse} mean that this will stay into the $D$-orthogonal subspace to $F$ for the remaining confusable strings $y$ and $z$.

Since the final state is fully in $F-D$, and the initial state was in $D$, they must be orthogonal, and we have exactly zero overlap. So, any valid setting of unitaries leads to a valid code with the correct growth rate.

\subsection{Commuting vs. Tensor Value}
This gives a noncommutative polynomial problem whose tensor-product value is equal to the unitary capacity, so that if we could get tight bounds on this value, we could compute the Shannon capacity.

In general, the tensor-product value is not computable.\supercite{mip_re} The sum-of-squares hierarchy converges to the commuting value of the problem, for which lower bounds are not computable; searching through finite-dimensional values converges to the tensor-product value from below. But for many games of interest, the tensor-product and commuting operator values coincide. It feels, at least to the author, as though this should be one of those cases.
\begin{conjecture}
    The tensor-product and commuting operator values of the problem defined in Sections \ref{sec:quantize1} and \ref{sec:quantize2} are equal.
\end{conjecture}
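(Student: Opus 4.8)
Write $v_{t}$ and $v_{co}$ for the tensor-product and commuting-operator values of the problem defined in Sections~\ref{sec:quantize1} and~\ref{sec:quantize2}. Every tensor-product assignment is in particular a commuting-operator assignment, so $v_t \le v_{co}$ is immediate, and by the soundness and completeness arguments above $v_t = \Theta_U(G)$; the content of the conjecture is therefore the single inequality $v_{co} \le \Theta_U(G)$. Since the commuting value can strictly exceed the tensor value for general noncommutative polynomial optimization problems\supercite{mip_re}, there is no hope of a structure-free argument: the proof must use the rigidity special to this problem, namely that the $U_{g,h}$ are honest isometries with the multiplicative law $U_{g,h} = U_{g,0}U_{0,h}$, that the order-two intertwiner $S$ satisfies $SU_{i,0}S = U_{0,i}$, and that the objective $\lambda_{\max}\!\left(D\sum_g U_{g,g}D\right)$ sees only the $D$-compressed dynamics of the completely positive map $C(\rho) = \sum_g U_g^\dagger \rho\, U_g$.

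The plan is to show that $v_{co}$ is attained, up to arbitrarily small error, by a \emph{finite-dimensional} assignment, which is automatically of tensor-product form. Starting from a commuting assignment of value at least $v_{co} - \epsilon$, I would first restrict to a minimal invariant subspace of the family $\{U_{g,h}\}$ attaining the objective (exactly as in the soundness argument) and pass to the GNS representation of the state $\omega = \bra{\psi}\,\cdot\,\ket{\psi}$ for the fixed vector $\ket{\psi}$ in the range of $D$ with $S\ket{\psi} = \ket{\psi}$. The crucial observation I would try to exploit is that the objective is the spectral radius of $C$ compressed by $D$, and that only the \emph{peripheral} part of $C$ --- the spectral subspace of modulus equal to the spectral radius --- contributes; the peripheral eigenspace of a completely positive map carries a finite-dimensional matrix-algebra structure (in the spirit of the Lindblad / Wolf analysis of peripheral spectra). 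I would use this to replace the possibly infinite-dimensional algebra generated by the $U_g$, $D$ and $F$ with a finite-dimensional operator system of at least the same objective value, checking that $D$, $F$, $S$ and the defining relations Eqs.~\ref{eqn:closure}--\ref{eqn:Pconfuse} all survive the compression.

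Once everything lives on a finite-dimensional Hilbert space, the commuting and tensor models coincide by standard finite-dimensional von Neumann algebra structure theory: decomposing the Hilbert space with respect to the algebra generated by the left operators and restricting to the central block that attains the maximum eigenvalue exhibits it as $A \otimes B$ with the left (resp.\ right) operators acting on $A$ (resp.\ $B$); the intertwiner $S$ needs a little extra bookkeeping, as it may pair two such blocks, but this only enlarges $A$ and $B$ by a bounded factor. This yields a tensor-product assignment of value at least $v_{co} - \epsilon$, and letting $\epsilon \to 0$ gives $v_{co} \le v_t$, hence equality.

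The step I expect to be the genuine obstacle is the finite-dimensionalization via the peripheral spectrum. A priori a commuting assignment could be supported on a type-$\mathrm{II}_1$ factor with no finite-dimensional tensor approximation, and nothing in the relations obviously rules this out; making precise that only the peripheral algebra of $C$ contributes to the objective, that this algebra is finite dimensional even when $C$ acts on an infinite-dimensional space, and that $D$, $F$ and $S$ can be transported consistently into the finite-dimensional picture is where the argument will either close or break. Should it break, the weaker statement worth salvaging --- and all that Theorem~\ref{thm:comp_imply_comp} actually needs --- is that the sum-of-squares hierarchy nonetheless converges to $\Theta_U(G)$ itself, rather than to some larger commuting value.
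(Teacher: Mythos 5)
This statement is a \emph{conjecture} in the paper, not a theorem: the paper offers no proof, only a one-line remark that it ``would most likely be proven by modifying the soundness proof above to no longer require the tensor product structure.'' So there is no house proof to compare against. Your route (peripheral spectrum of the associated completely positive map, then finite-dimensionalize, then appeal to finite-dimensional commutant structure) is a genuinely different strategy from what the paper sketches, and to your credit you flag exactly where it is most likely to fail. I want to sharpen two points about that gap. First, the peripheral-spectrum decomposition (Wolf--Perez-Garcia, Evans--H\o egh-Krohn, etc.) is an intrinsically finite-dimensional result; for a CP map on an infinite-dimensional von Neumann algebra there is no general theorem that the peripheral algebra is finite dimensional, and a type $\mathrm{II}_1$ obstruction is not something the stated relations obviously rule out. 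Second, and more basically, the quantity $\lambda_{\max}\!\left(D\sum_g U_{g,g} D\right)$ is the spectral radius of a CP map $\rho\mapsto\sum_g U_g^\dagger\rho\,U_g$ only \emph{after} one has the tensor factorization $\mathcal H\cong A\otimes B$ with $U_{g,0}=U_g\otimes I$ and $U_{0,g}=I\otimes U_g$; in a general commuting-operator assignment this identification is exactly what you are trying to establish, so invoking the CP-map spectral picture as the engine of finite-dimensionalization is circular unless you first extract a $W^*$-tensor factorization from commutativity, which in infinite dimensions is not automatic.

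One more remark on your closing sentence. The ``weaker salvageable statement'' --- that the sum-of-squares hierarchy converges to $\Theta_U(G)$ rather than to something larger --- is not actually weaker. By definition the hierarchy converges to the commuting value $v_{co}$, so ``the hierarchy converges to $\Theta_U(G)$'' is precisely the assertion $v_{co} = \Theta_U(G) = v_t$, which is the conjecture itself. A genuinely weaker statement that would still suffice for the paper's computability pipeline would be, for example, an explicit family of additional (perhaps level-dependent) NPO constraints under which the commuting value of the augmented problem provably collapses to $\Theta_U$, or a quantitative bound $v_{co}(G)\le f(|G|)\,\Theta_U(G)$ with $f(|G|)\to 1$ along the joins $G\nabla K_i$ used in Theorem~\ref{thm:comp_imply_comp}. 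Either of those would let the argument go through without resolving tensor-vs-commuting equality in full.
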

and if this is true, this would mean the Shannon capacity is in fact computable. This conjecture would most likely be proven by modifying the soundness proof above to no longer require the tensor product structure.

\printbibliography

\appendix

\section{Small Cases of Reversible Capacity}\label{app:SmallRev}
It's interesting to ask for small graphs whether $\Theta_{\textsf{REV}}$ and $\Theta$ really do coincide. Since $\alpha(G) \le \Theta_{\textsf{REV}}(G) \le \Theta(G)$, so for the graphs where $\Theta(G)$ and $\alpha(G)$ coincide, so does $\Theta_{\textsf{REV}}(G)$. This already solves many small graphs. The first two nontrivial cases are the 5- and 7-cycles.

\begin{theorem} The reversible capacity is tight on the 5-cycle.
    $$\Theta_{\textsf{REV}}(C_5) = \sqrt{5}.$$
\end{theorem}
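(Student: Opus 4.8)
Since $\Theta_{\textsf{REV}}(C_5) \le \Theta(C_5) = \sqrt 5$ by Lov\'asz's bound together with $\Theta_{\textsf{REV}}(G) \le \Theta(G)$, the whole content is the lower bound $\Theta_{\textsf{REV}}(C_5) \ge \sqrt 5$. The obvious first attempt is the generic construction from Section~5: take the optimal $n=2$ independent set $S = \{00,12,24,31,43\}$ of $C_5^{\boxtimes 2}$, which has $M = 5$ elements, and build a reversible DFA that reads two symbols, lands in one of five ``codeword'' states, and then reversibly rewinds by reading the base-$5$ (here base-$|G| = 5$) representation of the codeword index. That costs $\lceil \log_5 5 \rceil = 1$ extra digit, so we code $5$ strings in length $3$, giving only $5^{1/3} < \sqrt 5$. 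So the generic bound is not good enough, and I would instead construct a bespoke reversible DFA tailored to $C_5$.

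\textbf{The construction I would use.} The key realization is that the language $(00\,|\,12\,|\,24\,|\,31\,|\,43)^*$ already has growth rate $\sqrt 5$; it is recognized by the one-state DFA with five self-loops labelled by those length-$2$ blocks, but that DFA, viewed over the alphabet $V(C_5)$ with one transition per symbol, needs two states (an ``even'' state and an ``odd'' state) and is \emph{not} reversible, because from the odd state all of $0,2,4,1,3$ lead back to the even state. I would look for a reversible partial DFA whose transition matrices $T_0,\dots,T_4$ are subpermutation matrices with $\lambda_{\max}(\sum_g T_g) = \sqrt 5$ and which recognizes a distinguishable $C_5$-language. A natural candidate: index states by $\mathbb{Z}_5$ (or $\mathbb{Z}_5 \times \{0,1\}$ for the parity), and let symbol $g$ act as a partial permutation that is the shift $x \mapsto x+g$ on the ``write'' half-step and a fixed de-shift on the ``read-back'' half-step, chosen so that each column sum is at most $1$. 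One then verifies (i) subpermutation $\Rightarrow$ reversible partial DFA; (ii) every accepted string, cut into length-$2$ blocks, lies in $\{00,12,24,31,43\}$, hence the language is $C_5$-distinguishable by the same argument that makes the block code work; (iii) $\sum_g T_g$ restricted to the non-absorbing component has spectral radius exactly $\sqrt 5$ — this matrix should be (a permutation conjugate of) $\left(\begin{smallmatrix}0 & A\\ B & 0\end{smallmatrix}\right)$ with $AB$ a $5$-cycle-like $0/1$ matrix whose relevant eigenvalue squared is $5$.

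\textbf{An alternative, cleaner route.} Rather than exhibiting matrices by hand, I would phrase it as: the cyclic group $\mathbb{Z}_5$ acts freely on the $5$ codewords, so I can use a \emph{group-permutation} automaton. Build the DFA on state set $\mathbb{Z}_5 \times \{\text{phase } 0,1,2\}$: in phase $0$ reading $a$ go to $(a, \text{phase }1)$; in phase $1$ reading $b$ go to $(a, \text{phase }2)$ only if $(a,b)$ is one of the five blocks (else reject), implemented as the partial permutation on the first coordinate; in phase $2$ reading the single digit $c$ return to $(0,\text{phase }0)$ only if $c = a$. Each symbol's transition matrix is then a disjoint union of partial permutations across phases, hence a subpermutation matrix, so the DFA is reversible. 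This recognizes $\{00c_1,\ 12c_2,\ 24c_3,\ 31c_4,\ 43c_5\}^*$ with $c_i$ the forced check digit; but this still has period $3$, giving $5^{1/3}$. To actually reach $\sqrt 5$ I must \emph{avoid} spending a whole symbol on the rewind — the check digit has to do double duty as the start of the next block. So the real construction interleaves: the ``check digit'' $c$ for the current block is simultaneously the first symbol of the next length-$2$ block, which is possible precisely because the first symbols of the five blocks, $\{0,1,2,3,4\}$, are all distinct and exhaust $\mathbb{Z}_5$. That gives a genuinely period-$2$ reversible automaton with growth rate $\sqrt 5$.

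\textbf{Main obstacle.} The hard part is making the rewind free: naively enforcing reversibility costs extra length and kills the bound, and the whole proof hinges on the arithmetic coincidence that the five optimal length-$2$ blocks of $C_5$ have distinct first coordinates (one per vertex) \emph{and} the block structure is preserved when their check/rewind data is folded into the next block. I would spend the bulk of the proof writing down the explicit $5\times 5$ (or $10\times 10$) subpermutation matrices $T_0,\dots,T_4$, checking column sums $\le 1$, checking that the only accepted pairs of equal-length strings are blockwise in $\{00,12,24,31,43\}$ (so non-confusable), and computing that $\lambda_{\max}(\sum_g T_g) = \sqrt 5$; the remaining inequality $\Theta_{\textsf{REV}}(C_5)\le \sqrt5$ is immediate from $\Theta_{\textsf{REV}} \le \Theta$ and Lov\'asz.
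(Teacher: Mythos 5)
Your upper bound (via Lov\'asz and $\Theta_{\textsf{REV}}\le\Theta$) matches the paper, but on the lower bound you go down a wrong and much more complicated path. The paper's observation is that the \emph{standard} block-code DFA for $\{00,12,24,31,43\}^*$ is \emph{already} reversible: take one initial state $s_0$ plus five memory states $m_0,\dots,m_4$, with $s_0 \xrightarrow{g} m_g$ and $m_i \xrightarrow{h} s_0$ whenever $(i,h)$ is one of the five codewords (all other transitions reject). Each $T_g$ then has exactly one entry in the column of $m_g$ (from $s_0$) and at most one entry in the column of $s_0$ (from the unique $m_i$ with $(i,g)$ a codeword, because the five codewords have pairwise-distinct \emph{second} coordinates), with all other columns zero — so every column sum is $\le 1$ and each $T_g$ is a subpermutation matrix. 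Moreover $\sum_g T_g$ on these six states is the adjacency matrix of the star $K_{1,5}$, whose largest eigenvalue is $\sqrt 5$. No rewind, check digit, or $\mathbb Z_5$-action machinery is needed.

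The step that derails you is the claim that, over the single-symbol alphabet, the block-code automaton "needs two states (an even state and an odd state) and is not reversible." That is wrong on both counts: a two-state even/odd automaton forgets the first symbol of a block and would accept strings like $01$, so it does not recognize the language at all; the correct unrolling uses one midpoint state per block (six states total), and \emph{that} automaton is reversible for free. Having started from the wrong automaton, you conclude you must engineer a clever rewind, and the subsequent phase/interleave constructions are left unverified. You also identify the wrong ``coincidence'': what makes the transitions subpermutations is that the five codewords have distinct \emph{second} coordinates (this controls the column sums into $s_0$), not that their first coordinates exhaust $\mathbb Z_5$.
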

\begin{proof}
The easy independent set on $C_5^{\boxtimes 2}$ is $\{(1,1),(2,3),(3,5),(4,2),(5,4)\}$. The standard DFA to recognize patterns of this repeated is already reversible: the only memory is what the first symbol in each pair was, and there is no pair of pairs $(i_1,j)$ and $(i_2,j)$ that would lead to an irreversible transition induced by a symbol $j$. This shows $\Theta_{\textsf{REV}}(C_5) \ge \sqrt{5}$, and since $\Theta(C_5) = \sqrt{5}$, it is tight.
\end{proof}
The Shannon capacity of the 7-cycle is still open, but the reversible capacity seems reasonable capable.
\begin{theorem} The reversible capacity does better than $\alpha(G)$ on the 7-cycle.
    $$\Theta_{\textsf{REV}}(C_7) \ge \sqrt{10} = C_7^{\boxtimes 2}.$$
\end{theorem}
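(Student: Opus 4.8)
The plan is to certify $\Theta_{\textsf{REV}}(C_7)\ge\sqrt{10}$ in the same way as the $C_5$ case: exhibit a reversible partial DFA that recognizes a distinguishable language on $C_7$, check reversibility by verifying that each transition matrix $T_s$ is a subpermutation, check distinguishability with the polynomial-time test of Theorem~\ref{thm:polycheck} (equivalently Algorithm~\ref{alg:check}), and read off the growth rate as the dominant eigenvalue of the reduced transition matrix via Theorem~\ref{thm:simple_growth}. The only real work is producing the automaton, and here the $C_5$ recipe does not transfer. A maximum independent set $S$ of $C_7^{\boxtimes 2}$ has $|S|=10>7=|V(C_7)|$, so some symbol occurs as the second coordinate of two distinct pairs of $S$; the minimal DFA for $S^{*}$ then has a state with two incoming edges carrying that symbol, a configuration no reversible automaton can have, and the standard unfolding that would split such a merge does not terminate. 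So $S^{*}$ itself is simply not a reversible language.

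One should therefore look for a small reversible partial DFA whose reduced transition matrix $R=\sum_s T_s$ has dominant eigenvalue exactly $\sqrt{10}$ — equivalently, since $\sqrt{10}$ is irrational, $R$ has characteristic polynomial divisible by $\lambda^2-10$, e.g.\ of the form $\lambda^{k}(\lambda^2-10)$; a three-state hub $\left(\begin{smallmatrix}0&3&1\\3&0&0\\1&0&0\end{smallmatrix}\right)$ already has characteristic polynomial $\lambda(\lambda^2-10)$ — whose accepting-reachable component is strongly connected and whose induced code is distinguishable on $C_7$. Because $\alpha(C_7)=3$ bounds how many symbols may induce a single transition along a path to the accepting state, the naive ``two-periodic'' realizations are forced to build their length-two codewords out of products of independent triples of $C_7$, and $\{0,2,4\}^{2}$ is a maximal size-$9$ independent set of $C_7^{\boxtimes 2}$ that does not extend to a size-$10$ one — so those attempts stall below $\sqrt{10}$. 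The certifying automaton must instead be a somewhat larger, irregular reversible DFA mirroring an irregular maximum independent set of $C_7^{\boxtimes 2}$, and such an automaton can be located by a finite computer search over small reversible partial DFAs, of the same flavor as the computations used for the other small cases; the write-up would then simply display its transition tables.

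The main obstacle is exactly this construction: reconciling reversibility — which, together with the constraint $\alpha(C_7)=3$ on symbol multiplicities, rigidifies the reduced matrix — with distinguishability on $C_7$, which forbids product codes because the extremal independent sets of $C_7^{\boxtimes 2}$ are not of product form. Once a satisfying automaton $\mathcal{M}$ is in hand, the remainder is routine: verify each $T_s$ is a subpermutation (reversibility), run Algorithm~\ref{alg:check} to confirm that no two equal-length accepted strings are confusable on $C_7$ (so $\mathcal{M}$ recognizes a valid code), and invoke Theorem~\ref{thm:simple_growth} (using Theorem~\ref{thm:one_acc} to fix a single accepting state without changing the growth rate) to conclude that the growth rate of $\mathcal{M}$ equals the largest eigenvalue of its reduced matrix, namely $\sqrt{10}$, whence $\Theta_{\textsf{REV}}(C_7)\ge\sqrt{10}>3=\alpha(C_7)$.
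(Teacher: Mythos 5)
Your central premise — that no size-$10$ independent set $S$ of $C_7^{\boxtimes 2}$ can have $S^{*}$ recognized by a reversible DFA — is false, and it is exactly the step the paper refutes by construction. Your pigeonhole argument observes that since $10 > 7$, some symbol $b$ appears as the second coordinate of two distinct codewords $(a_1,b)$ and $(a_2,b)$, and concludes that the minimal DFA for $S^{*}$ must have two $b$-edges into the accepting state. But that conclusion only follows if the Myhill--Nerode classes $[a_1]$ and $[a_2]$ are distinct; if $a_1$ and $a_2$ have \emph{identical completion sets} $\{b : (a_1,b)\in S\} = \{b : (a_2,b)\in S\}$, the minimal automaton merges them into a single intermediate state, and there is only one $b$-edge after all. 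The paper chooses
\[
S = \{(0, 2), (0, 4), (1, 6), (2, 1), (2, 3), (3, 5), (4, 1), (4, 3), (5, 5), (6, 0)\}
\]
precisely so that every such collision is benign: first coordinates $2$ and $4$ both have completion set $\{1,3\}$, and $3$ and $5$ both have completion set $\{5\}$. After merging, the DFA has a hub state $s_{\mathrm{init}}$ and five intermediate states $S_0,S_1,S_{24},S_{35},S_6$ with completion sets $\{2,4\},\{6\},\{1,3\},\{5\},\{0\}$; these are pairwise disjoint, so every column of every $T_s$ has at most one nonzero entry, i.e.\ every $T_s$ is a subpermutation and the DFA is reversible. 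No irregular automaton, no computer search, and no reasoning about extremal independent sets being non-product is needed — the standard block-coding automaton already works once $S$ is picked with the merging structure in mind. The remainder of your plan (check subpermutation, check distinguishability via Algorithm~\ref{alg:check}, read off the growth rate via Theorem~\ref{thm:simple_growth}) is fine and matches the paper, but it rests on first producing the automaton, and you talked yourself out of the construction that actually exists.

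Secondary points: your claim that the ``naive two-periodic realizations are forced to build their length-two codewords out of products of independent triples'' is likewise unfounded; the paper's hub DFA is two-periodic and realizes a non-product code. And the $3\times 3$ matrix you exhibit with characteristic polynomial $\lambda(\lambda^2-10)$ is a nice heuristic but does not by itself supply an automaton recognizing a distinguishable language, so it does no work in a proof.
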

\begin{proof}
One independent set on $C_7^{\boxtimes 2}$ is $\{(0, 2), (0, 4), (1, 6), (2, 1), (2, 3), (3, 5), (4, 1), (4, 3), (5,  5), (7, 0)\}$. This is also recognized reversibly: the first symbol read puts the DFA in one of the five states $S_0, S_1, S_{24}, S_{35}$, or $S_7$. In each case, the accepted following symbols are $\{2,4\}, \{6\}, \{1,3\}, \{5\}$, or $\{0\}$, respectively, and this returns to the initial state. Since these five sets of allowed following symbols are disjoint, the transitions are reversible.
\end{proof}
An informative check would be, through explicit enumeration, checking for $n$ and $m$ if we can reversibly recognize maximum independent sets of $C_n^{\boxtimes m}$. A computer search for $\alpha_{C_7^{\boxtimes 3}} = 33$ should be feasible with a SAT solver.

\section{Non-commutative Polynomial Constraints for Reversible Capacity}\label{sec:revNPO}
For a graph $G$, this lists the full set of constraints (that we propose) for a reversible DFA on $G$.
\begin{align*}
\textrm{Variables:} &\quad S, \quad D,\quad F,\quad \forall_{g\in G}: T_{g,0}, T_{0,g},\quad \forall_{g,h\in G}: T_{g,h}\\
\textrm{Maximize:} &\quad \lambda_{\max}\left(D\left(\sum_{g} \frac{T_{g,g} + T_{g,g}^\dagger}{2}\right)D\right)\\
\textrm{Subject to:}
&\quad \forall_{g,h}\,\, T_{0,h} T_{g,0} = T_{g,0} T_{0,h}\\
&\quad \forall_{g,h}\,\, T_{g,h} = T_{g,0} T_{0,h}\\
&\quad \forall_{g}\,\, 0 \preceq T_{g,0}^\dagger T_{g,0} \preceq I,\quad 0 \preceq T_{0,g}^\dagger T_{0,g} \preceq I\\
&\quad \forall_{g}\,\, 0 \preceq T_{g,0} T_{g,0}^\dagger \preceq I,\quad 0 \preceq T_{0,g} T_{0,g}^\dagger \preceq I\\
&\quad \forall_{g}\,\, T_{g,0} T_{g,0}^\dagger T_{g,0} = T_{g,0},\quad T_{0,g} T_{0,g}^\dagger T_{0,g} = T_{0,g}\\
&\quad S = S^\dagger,\quad D = D^\dagger,\quad F = F^\dagger\\
&\quad S^2 = I,\quad D^2 = D,\quad F^2 = F\\
&\quad \forall_g\,\, S T_{g,0} = T_{0,g} S\\
&\quad S D = D S,\quad S F = F S,\quad D F = F D\\
&\quad \forall_{g}\,\, D T_{g,g} D = T_{g,g} D\\
&\quad F \succeq D\\
&\quad \forall_{g}\,\, F T_{g,g} F = T_{g,g} F\\
&\quad \forall_{(g,h)\in E(G)}\,\, F T_{g,h} F = T_{g,h} F\\
&\quad \forall_{(g,h)\in E(G)}\,\, F T_{g,h} D = 0\\
\end{align*}
This last equation is redundant with the above, but could help an actual computational attempt:
\begin{align*}
&\quad \forall_{g,h}\,\, S T_{g,h} = T_{h,g} S\\
\end{align*}

\section{Non-Hermitian Sum-of-Squares}\label{sec:nonHermSoS}
It seems necessary that we have a way to do non-commutative polynomial optimization with non-Hermitian operators and, in particular, a non-Hermitian objective. Typically the NPA hierarchy focuses entirely on Hermitian operators and Hermitian polynomial constraints (equations whose form is manifestly invariant under the adjoint). Follow up works\supercite{NPA10_nonherm} allowed non-Hermitian variables, but the constraint equations and objective had to be Hermitian polynomials.

It turns out that we can express {\em general} non-commutative polynomial optimization problem, with non-Hermitian variables, constraints, and objective, can be written in entirely Hermitian terms. This would be necessary if one wanted to attempt the NPO relaxation for reversible capacity in Appendix \ref{sec:revNPO}.

Any operator $X$ can be split into its Hermitian and anti-Hermitian parts: $X = X_h + i X_a$. Then $X_h$ and $X_a$ are two Hermitian variables. Our approach is to express any problem in terms of these parts. Scalars and sums go through in the straightforward way:
$$\alpha\in \mathbb{R}:\quad \alpha X = (\alpha X_h) + i (\alpha X_a)$$
$$ i X = (-X_a) + i(X_h)$$
$$X + Y = (X_h+Y_h) + i (X_a+Y_a)$$
For a product $X Y$, the Hermitian part can be computed as $\frac{XY + Y^\dagger X^\dagger}{2}$, and the anti-Hermitian part is what remains.
$$(XY)_h = \frac{1}{2}((X_h Y_h + Y_h X_h) - (X_a Y_a + Y_a X_a) + i (X_h Y_a - Y_a X_h) + i (X_a Y_h - Y_h X_a))$$
$$(XY)_a = \frac{1}{2}((X_h Y_a + Y_a X_h) + (X_a Y_h + Y_h X_a) -i(X_h Y_h - Y_h X_h) + i(X_a Y_a - Y_a X_a))$$
The equivalence goes through both ways: any valid assignment of $X_h$ and $X_a$ that obeys the constraints gives a corresponding $X$, and vice versa. This is enough to then write any polynomials involving these Hermitian variables, and all expressions generated are manifestly Hermitian polynomials. If any one of our variables $H$ is already Hermitian, then we just use $H = H_h$ and omit a variable $H_a$.

Then there is the objective, which would generally take the form $\lambda_{\max}(M)$. For this to be a meaningful ``max", we define this to be the largest {\em real, right} eigenvalue of the operator. If $M$ is Hermitian, this fits in the standard framework. If $M$ is not Hermitian, add a new Hermitian variable $P$, the constraint $MP = P^2$, and the objective $\lambda_{\max}(P)$. We claim that this gives an equivalent description of the problem as long as $\lambda_{\max}(M) \ge 0$.
\begin{enumerate}
    \item For any feasible solution to the original problem, let $\lambda$ be the maximum real eigenvalue of $M$, with associated eigenvector $v$. Let $P = \lambda vv^\dagger$. Then
$$MP = M (\lambda v v^\dagger) = \lambda (Mv) v^\dagger = \lambda (\lambda v) v^\dagger = \lambda^2 v v^\dagger = P^2$$
so the new problem has at least as good an optimum.
    \item For the other direction, take a feasible solution to the new problem with a score $\lambda = \lambda_{\max}(P) > 0$ and an associated eigenvector $v$. Since we know that $MP = P^2$, it must be that
$$Mv = M(\lambda^{-1}Pv) = \lambda^{-1}MPv = \lambda^{-1}P^2v = \lambda v$$
and $v$ is a right eigenvector of $M$ with eigenvalue $\lambda$, as long as $\lambda \neq 0$. So the original problem (without $P$) has a feasible solution where $M$ has an eigenvalue of $\lambda$ as well.
\end{enumerate}
If $\lambda_{\max}(M) \not\ge 0$ there is an issue. $P=0$ is always a solution as well, with an according objective value of $\lambda_{\max}(0) = 0$, so if the original problem had a negative optimum value, then the new problem will have an optimum eigenvalue of zero. This isn't a problem in practice: as long as we have some lower bound on the original problem, that the optimum is at least $-k$, we can replace the objective $M$ with $M+kI$ to raise the optimum value above zero.

\begin{theorem}
    Any noncommutative polynomial eigenvalue optimization problem with objective as the largest real eigenvalue, and some known lower bound on the optimum value, can be rewritten purely in terms of Hermitian operators and Hermitian polynomials, with at most $2n+1$ variables and $2c+2$ constraints.
\end{theorem}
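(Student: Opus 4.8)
The plan is simply to bolt together the three reductions developed in this appendix and tally the resulting sizes. Starting from a problem in $n$ operator variables $X_1,\dots,X_n$ with $c$ polynomial constraints and objective $\lambda_{\max}(M)$, I would first replace each $X_i$ by the Hermitian pair $(X_{i,h}, X_{i,a})$ via $X_i = X_{i,h} + i X_{i,a}$; as noted above this is a bijection between operator tuples and Hermitian-operator tuples (with inverse $X_{i,h} = (X_i + X_i^\dagger)/2$ and $X_{i,a} = (X_i - X_i^\dagger)/(2i)$), and a variable that is already Hermitian contributes only $X_{i,h}$. In the worst case this gives $2n$ Hermitian variables.

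Next I would push each constraint through the scalar, sum, and product rules for $(\cdot)_h$ and $(\cdot)_a$ recorded above: every polynomial $p$ in the $X_i$ becomes $p = p_h + i\,p_a$ with $p_h, p_a$ manifestly Hermitian polynomials in the new variables, so a constraint $p = 0$ splits into the Hermitian pair $p_h = 0$, $p_a = 0$ (just $p_h = 0$ when $p$ was already Hermitian). That is at most $2c$ constraints. For the objective, if $M$ is Hermitian there is nothing to do; otherwise I would first invoke the hypothesis, using the known lower bound $-k$ on the optimum to pass to the objective $M + kI$ (same feasible set, all eigenvalues shifted by $k$, optimum now $\ge 0$), then adjoin a single fresh Hermitian variable $P$, the constraint $MP = P^2$ — two Hermitian constraints after the Step-2 splitting — and the objective $\lambda_{\max}(P)$. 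The two-sided equivalence was already checked: $P = \lambda vv^\dagger$ built from a maximizing real eigenpair of $M$ is feasible for the new problem, and conversely any feasible $P$ with $\lambda_{\max}(P) = \lambda > 0$ forces $Mv = \lambda v$ on its top eigenvector. Adding up: at most $2n+1$ variables and $2c+2$ constraints.

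I expect the only genuinely delicate point to be the degenerate case $\lambda_{\max}(M) \le 0$: there $P = 0$ is feasible with value $0$, so without the shift the $P$-reduction would only be an inequality in one direction. The assumed lower bound is precisely what licenses the shift $M \mapsto M + kI$ that removes this pathology, which is why the hypothesis appears in the statement. Everything else — that $(\cdot)_h$ and $(\cdot)_a$ really are polynomial maps, that they compose correctly under products, and that ``largest real right eigenvalue'' is stable under the shift and under replacing $M$ by $P$ — is routine verification rather than a real obstacle.
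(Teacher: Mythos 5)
Your proposal is correct and follows essentially the same route as the paper: split each variable into Hermitian and anti-Hermitian parts ($2n$ variables), split each polynomial constraint into its Hermitian and anti-Hermitian parts ($2c$ constraints), shift the objective by $kI$ using the assumed lower bound to force $\lambda_{\max}\ge 0$, and then linearize the non-Hermitian objective with a single fresh Hermitian variable $P$ and the constraint $MP=P^2$ (one variable, two Hermitian constraints after splitting). The bookkeeping and the role of the lower-bound hypothesis are exactly as in the paper.
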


\end{document}